\documentclass[letterpaper]{article} 
\usepackage{aaai2026}  
\usepackage{times}  
\usepackage{helvet}  
\usepackage{courier}  
\usepackage[hyphens]{url}  
\usepackage{graphicx} 
\urlstyle{rm} 
\usepackage{natbib}  
\usepackage{caption} 
\frenchspacing  
\usepackage{amsmath}
\usepackage{pifont}
\usepackage{amsthm}
\usepackage{dsfont}
\usepackage{bm}
\usepackage{graphicx}
\usepackage{amssymb}
\usepackage{array}

\newtheorem{theorem}{Theorem}

\newtheorem{lemma}{Lemma}
\newtheorem{assumption}{Assumption}
\newtheorem{definition}{Definition}
\newtheorem{corollary}{Corollary}
\newtheorem{observation}{Observation}

\newtheorem{example}{Example}

\setlength{\pdfpagewidth}{8.5in} 
\setlength{\pdfpageheight}{11in} 
%
\usepackage{algorithm}
\usepackage{algorithmic}

\newcommand{\ind}{\perp\!\!\!\!\perp}
\newcommand{\reals}{\mathbb R}
\newcommand{\dd}{\mathrm{d}}
\copyrighttext{Code: \texttt{https://github.com/sidsrinivasan/womac}}
%
\usepackage{newfloat}
\usepackage{listings}
\DeclareCaptionStyle{ruled}{labelfont=normalfont,labelsep=colon,strut=off} 
\lstset{%
  basicstyle={\footnotesize\ttfamily},
  numbers=left,numberstyle=\footnotesize,xleftmargin=2em,
  aboveskip=0pt,belowskip=0pt,%
showstringspaces=false,tabsize=2,breaklines=true}
\floatstyle{ruled}
\newfloat{listing}{tb}{lst}{}
\floatname{listing}{Listing}
%
\pdfinfo{
  /TemplateVersion (2026.1)
}

\setcounter{secnumdepth}{0} 

%


\title{WOMAC: A Mechanism For Prediction Competitions}
\author{
  Siddarth Srinivasan\textsuperscript{\rm 1},
  Tao Lin\textsuperscript{\rm 1},
  Connacher Murphy\textsuperscript{\rm 2}\\
  Anish Thilagar\textsuperscript{\rm 3},
  Yiling Chen\textsuperscript{\rm 1},
  Ezra Karger\textsuperscript{\rm 2, 4}
}
\affiliations{
  \textsuperscript{\rm 1}Harvard University,
  \textsuperscript{\rm 2}Forecasting Research Institute,
  \textsuperscript{\rm 3}CU Boulder,
  \textsuperscript{\rm 4}Federal Reserve Bank of Chicago


  %
}

\usepackage{bibentry}

\begin{document}

\maketitle

\vspace*{-26pt}
\begin{abstract}
  Competitions are widely used to identify top performers in judgmental forecasting and machine learning, and the standard competition design ranks competitors based on their cumulative scores against a set of realized outcomes or held-out labels. However, this standard design is neither incentive-compatible nor very statistically efficient. The main culprit is noise in outcomes/labels that experts are scored against; it allows weaker competitors to often win by chance, and the winner-take-all nature incentivizes misreporting that improves win probability even if it decreases expected score. Attempts to achieve incentive-compatibility rely on randomized mechanisms that \emph{add even more noise} in winner selection, but come at the cost of determinism and practical adoption. To tackle these issues, we introduce a novel \emph{deterministic} mechanism: WOMAC (Wisdom of the Most Accurate Crowd). Instead of scoring experts against noisy outcomes, as is standard, WOMAC scores experts against the best ex-post aggregate of peer experts' predictions given the noisy outcomes. WOMAC is also more efficient than the standard competition design in typical settings. While the increased complexity of WOMAC makes it challenging to analyze incentives directly, we provide a clear theoretical foundation to justify the mechanism. We also provide an efficient vectorized implementation and demonstrate empirically on real-world forecasting datasets that WOMAC is a more reliable predictor of experts' out-of-sample performance relative to the standard mechanism. WOMAC is useful in any competition where there is substantial noise in the outcomes/labels.
\end{abstract}


\section{Introduction}
\label{sec:intro}

Competitions are widely used to identify top performers across domains. Judgmental forecasting competitions elicit predictions on future events and pick one or a small handful of winners. In machine learning, public leaderboards rank models by performance on held-out validation sets, shaping perceptions of algorithmic progress and conferring both prestige and financial rewards. In both settings, a \emph{crowd of experts} (human or algorithmic) competes to achieve the highest score on a shared set of \emph{prediction tasks}. Success on platforms like Kaggle, Metaculus, or in high-profile contests such as the Netflix Prize yields substantial monetary and reputational benefits. Thus, good design is essential to accurately identify and reward top performers.

What makes for a good competition design? First, it should be \emph{incentive compatible (IC)}: each competitor must want to submit their true predictions without distortion. Second, it should be \emph{efficient}: the competition should identify top performers with high probability based on relatively few tasks. The former makes the competition useful to the designer by providing access to high-quality, accurate predictions or models. The latter is essential for identifying top performers, and also incentivizes stronger competitors to participate by assuring them of a fair chance of winning. A third criterion, important to practitioners but often overlooked, is that it is \emph{deterministic}: given a set of expert predictions, the competition always picks the same winner.

The standard mechanism used by competitions, such as the Good Judgment Project in the IARPA-funded ACE tournament \citep{mellers2014psychological} or Kaggle, scores each competitor's predictions against realized outcomes/held-out labels, and chooses the competitor with the best cumulative score as the winner. While this is a deterministic mechanism, it is neither incentive-compatible nor particularly efficient when there is noise in the outcomes/labels \citep{aldous2021prediction, frongillo2021efficient}. The challenge is that noisy outcomes, combined with the winner-take-all nature of competitions, induce experts to care about their \emph{relative} score and win probability instead of their \emph{absolute} score. Weaker experts can easily `luck into' winning, and rational experts may perversely be incentivized to skew their predictions to mimic weaker experts. Surprisingly, all proposed alternatives \citep{witkowski2023incentive, frongillo2021efficient} thus far give up determinism for incentive-compatibility, and thus require \emph{adding even more noise} to the selection of the winner.

\begin{algorithm}
  \caption{WOMAC}
  \textbf{Input:} $m \times n$ matrix ${\bf W}$ of expert predictions, $m \times 1$ vector $\vec{y}$ of outcomes
  \begin{algorithmic}[1]\label{alg:algorithm}
    \FOR{each task $i = 1$ to $m$}
    \FOR{each expert $j = 1$ to $n$}
    \STATE Learn $\beta_{ij} = \text{argmin}_{\beta'} \|\vec{y}_{-i} -  f_{\beta'}({\bf W}_{-ij}) \|^2$
    \STATE Compute ${t}_{ij} \gets f_{\beta_{ij}}(\vec{w}_{i,-j})$, a per-expert, per-task reference solution.
    \ENDFOR
    \ENDFOR
    \STATE Compute $S_j = \sum_i \|w_{ij} - t_{ij}\|^2$, scores for each expert
    \STATE \textbf{Return:} Winner $j^* = \displaystyle \mathop{\mathrm{arg\,min}}_j S_j$
  \end{algorithmic}
\end{algorithm}

Motivated by these challenges, this paper proposes a novel \emph{deterministic} mechanism for prediction competitions called \emph{WOMAC} (Wisdom of the Most Accurate Crowd). Instead of scoring against (noisy) realized outcomes, WOMAC uses a meta-learner---a predictive denoising model that fits experts' predictions to observed outcomes---to learn the best ex-post aggregation of experts' predictions on each task, and scores experts based on accuracy with respect to this aggregate.\footnote{For example, suppose the ground truth probability of an event is $\theta =0.05$, but the event happens so $y=1$. Under the standard mechanism, a strong expert who reasonably reported $w=0.06$ is unlucky when evaluated against the outcome. Under WOMAC, they are evaluated more fairly against an estimate like $\theta^s=0.04$.} In practice, this can mean taking a simple average of the top-$k$\% of experts by MSE against the realized outcomes, and scoring all experts against this average.  Importantly, WOMAC makes use of the information in the distribution of experts' reported predictions to arrive at a better reference to score experts against, instead of discarding this as in the standard mechanism. This also means it is more efficient than the benchmark standard mechanism in practical settings. While the increased complexity of WOMAC makes it challenging to analyze incentives directly, we nevertheless can provide solid theoretical foundations to justify the mechanism. Our proposed mechanism is useful in identifying top participants in both judgmental forecasting and ML competitions where the outcomes/labels are \emph{substantially noisy}.\footnote{This makes WOMAC appealing by default in binary outcome competitions where the 0/1 outcome is a noisy draw generated by some underlying Bernoulli parameter.}

This paper makes five contributions. First, we tackle a basic question: are deterministic prediction competitions \emph{inherently} not incentive-compatible, or is it due to scoring experts against noisy outcomes? We prove that deterministic competitions \emph{can be} Bayes-Nash incentive-compatible if experts are scored directly against a noiseless `ground truth' for a broad range of realistic distributions, establishing outcome/label noise as the challenge to incentives in deterministic competitions. Second, we show an intuitive result that scoring experts against less noisy estimates of the ground truth yields a higher probability of identifying the top expert, relative to scoring against noisier estimates. Third, we use these insights--that motivate scoring experts against less noisy estimates of the ground truth--to propose the WOMAC (wisdom of the most accurate crowd) mechanism; (4) we provide an efficient, vectorized implementation of our mechanism in PyTorch; and (5) we present experimental results on judgmental forecasting datasets showing that WOMAC is a better predictor of experts' scores and ranks on held-out questions than the standard competition.

This paper is structured as follows: we first present the background and related work; subsequently we describe the model and mechanism; next, we provide our theoretical results; then, we use the theoretical insights to develop the WOMAC mechanism; finally, we present experimental results using WOMAC. Proofs are deferred to the appendix.

\section{Background and Related Work}\label{sec:back}

\paragraph{Standard Mechanism}
As discussed in the introduction, an ideal competition should satisfy a \textbf{game theoretic} property (incentive-compatibility) and a \textbf{statistical} property (efficiency). In the standard mechanism for prediction competitions, experts' predictions are scored against outcomes using a strictly proper scoring rule (usually mean squared error), and the mechanism picks the expert with the lowest error as the winner. While competition mechanism efficiency is benchmarked relative to this standard mechanism, we can observe that this is limited by the noise in observations. Additionally, the mechanism is certainly not incentive-compatible (IC). Evaluating experts using MSE is IC when experts want to maximize their total proper score, but not when they want to maximize their win probability as in winner-take-all tournaments \citep{lichtendahl2007probability}. In the standard mechanism, experts can actually \emph{increase their win probability by misreporting predictions that reduce their expected score}. Even if experts reported truthfully \citet{aldous2021prediction} demonstrates the thorny statistical issues that still arise from noisy signals. Specifically, it's possible that by pure chance, some experts will have their noisy predictions align with the outcomes, thus lucking into a better score than the strongest experts and winning the competition (see Figure \ref{fig:gauss} for an example).  \citet{monroe2025hedging} do show that the standard mechanism can sometimes still meet both criteria, but only when there are a small number of skilled experts.

\begin{figure}[htbp]
  \centering
  \includegraphics[width=0.45\textwidth]{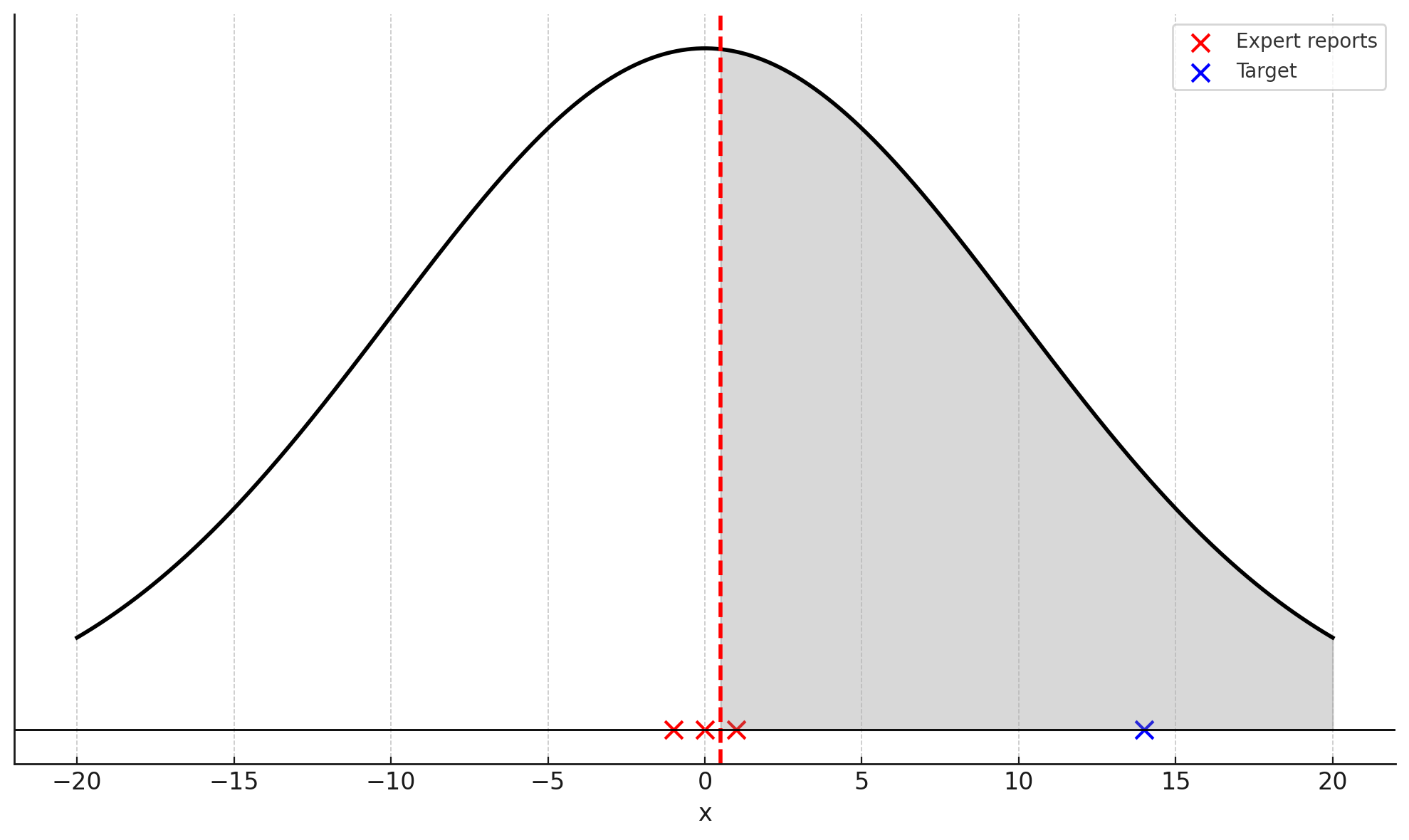}
  \caption{
    \textbf{Illustration of incentive issues in a single-question competition.}
    The $x$ axis is the report and reference space.
    All experts' \textcolor{red}{predictions (red)} are drawn from $\mathcal{N}(0, \tau_1^2)$ and the \textcolor{blue}{outcome (blue)} $Y \sim \mathcal{N}(0, \tau_2^2)$ is drawn from the black curve, with $\tau_2 \gg \tau_1$.  With high probability, experts' predictions are `close' to $0$, but the outcome is likely far from $0$. This has two consequences: (1) \textbf{statistical}: weaker, noisier experts are more likely to fall closer to the outcome than stronger, more precise experts. This means the winner is likely to be a weaker expert; (2) \textbf{game-theoretic}: an expert may unilaterally deviate to `outflank' the others and be closer to $Y$, so truthful reporting is not Bayes-Nash IC. Indeed, if they report truthfully, they win $\frac{1}{n}$ if the time, but outflanking the other experts on one `side' rockets the win probability to nearly 50\%. This is a Bayesian instance of the \citet{hotelling1929stability} game and has no IC equilibria.
  }
  \label{fig:gauss}
\end{figure}

\paragraph{Alternatives}
\citet{witkowski2023incentive} propose ELF, a strictly IC lottery-based mechanism for prediction competitions. In addition to not being deterministic, \citet{frongillo2021efficient} show that ELF is not efficient: for $n$ experts, ELF requires $O(n \log n)$ events. By contrast, they show the standard mechanism with truthful experts requires only $O(\log n)$ events, an exponentially better rate. As a compromise, they propose an FTRL-based mechanism that achieves the same $O(\log n)$ rate but only achieves approximate-IC (experts will report something close to their true beliefs). Fundamentally, both mechanisms achieve IC by adding noise to the scores of the standard mechanism faster than the label noise can, thus undermining the strategy of distorting true beliefs. Thus, the advantage an expert gains by increasing their variance is dwarfed by the loss in expected score, leading experts to not distort their reports too much.
However, to the best of our knowledge, neither mechanism has ever been used in a meaningful competition; competition runners still strongly prefer the standard mechanism despite its incentive issues. One important reason is that both alternative mechanisms rely on randomness: ELF uses random lotteries for each event, and FTRL intentionally adds noise to the winner selection. This motivates the need for \textbf{determinism} for a competition mechanism to be practically useful.

\paragraph{WOMAC}
Our WOMAC mechanism is a deterministic mechanism that is also \emph{more efficient} than the benchmark standard competition in practical settings.\footnote{In most practical settings, particularly with binary outcomes, the outcome is a fairly noisy view of the ground truth.} Our key insight is that with $n$ experts, we have $n$ pieces of information about the `ground truth'. Under reasonable distributional assumptions we can leverage this information to arrive at a much more precise estimate of the ground truth than possible by only using the noisy outcome. Specifically, we can learn a weighting over experts' to find the best ex-post aggregation of predictions given the observed outcomes. This is similar in spirit to the proxy scoring rule approach by \citet{10.5555/3298239.3298348}, except we still use the ground truth to determine weights over experts. \citet{lichtendahl2013wisdom} also showed how an aggregation of several experts' forecasts often outperforms any one expert's.  While previous designs see additional experts as a headache that worsens the prospects of finding the best performer, we see a silver lining: they bring additional information that \emph{reduces} noise in the reference solution that experts are scored against, so our mechanism is not as impacted by adding experts as the previous proposals. Lastly, we note that the idea of fitting a meta-learner from a collection of model predictions to the labels is called  \emph{stacking} in machine learning, an ensemble method used to improve predictive accuracy \citep{wolpert1992stacked}. We repurpose this machinery not for prediction accuracy, but to infer the best estimate of the ground truth for the competition setting.


\section{Model}

We provide a simple Bayesian model of how a set of $J$ \emph{experts} (with $|J| = n \geq 2$) generate predictions on a set of $I$ tasks (with $|I| = m$), along with attendant assumptions. Subsequently, we consider a \emph{winner-take-all} competition that elicits predictions on the $m$ tasks from $n$ experts and picks a \emph{single winner}.

Let $\theta_i \in \reals$ be the `ground truth' that stochastically generates \emph{outcome} $Y_i \in \mathcal{Y}$ on task $i$, where $\mathcal{Y}$ is the outcome space.\footnote{For example, if $\mathcal{Y} = \{0, 1\}$, $\theta_i$ can be the logistic transform of the Bernoulli parameter, so $Y_i \sim \text{Bernoulli}(\sigma(\theta_i))$ where $\sigma(\cdot)$ is the logistic sigmoid. If $\mathcal{Y} = \reals$, we could have $Y_i \sim \mathcal{N}(\theta_i, 1)$.}  Philosophically, $\theta_i$ is the `correct' or `best' prediction to make on task $i$ given the totality of available information. $Y_i$ is then a `noisy view' of this ground truth.

Let expert $j$'s prior over $\theta_{i}$ be $\mathcal{P}_j(\theta_i)$. Then, expert $j$ obtains signal $Z_{ij} \in \reals$ on task $i$, drawn from the distribution $\mathcal{Q}_j(Z_{ij}| \theta_i)$.\footnote{The better the expert, the more accurate this draw.} Expert $j$'s Bayesian posterior is $\mathcal{P}_j(\theta_i | Z_{ij})$, and the expert's Bayesian expectation of $\theta_i$ is ${X}_{ij} = \mathbb{E}[\theta_i | Z_{ij}] \in \reals$. We can also directly write $\mathcal{Q}_j({X}_{ij}|\theta_i)$ to reference the distribution that expert $j$'s Bayesian mean prediction is drawn from. Experts' signals and predictions are thus themselves \emph{also} noisy views of the ground truth $\theta_i$.

\paragraph{Notation} We use $i$ to index task and $j$ to index expert. We use vector notation to denote a collection of ground truths, predictions, etc. (e.g., $\vec{\theta} \in \reals^{m}$ is the vector of `ground truths').  We use boldface to denote a matrix collection across tasks and experts (e.g., ${\bm X}$ is a $m \times n$ matrix, $x_{ij}$ is expert $j$'s prediction on task $i$). Subscript $-j$ denotes a matrix that excludes a column corresponding to expert $j$, and subscript $-ij$ denotes the matrix excluding expert $j$'s column and task $i$'s row. We drop expert indices from distributions (e.,g write $\mathcal{P}_j(\vec{\theta}|\vec{Z}_j)$ as $\mathcal{P}(\vec{\theta}|\vec{Z}_j)$) when it is clear from context.

\subsection{Assumptions}
Now, we enumerate some standard assumptions about the structure of tasks and experts' predictions. We discuss these assumptions in greater detail in the Appendix.

\begin{assumption}[Tasks are IID]\label{as:iid}
  We assume that tasks are independent and identically distributed. In other words, $\theta_i \ind \theta_{i'}$ and $\mathcal{P}_j(\theta_i) = \mathcal{P}_j(\theta_{i'})$ for tasks $i \neq i'$.
\end{assumption}

\begin{assumption}[Conditional Independence]\label{as:cid}
  We assume that experts' signals and Bayesian predictions on a given task are independent given the ground truth, i.e., $Z_{ij} \ind Z_{ij'} | \theta_i$ and $X_{ij} \ind X_{ij'} | \theta_i$.
\end{assumption}

\begin{assumption}[Full Support] \label{as:support}
  We assume that experts' prior and prediction distributions have full support, i.e., $\mathcal{P}_j({\theta}_i) > 0$ and $\mathcal{Q}_j({{x}}_{ij}|{\theta}_i) > 0$ for all ${x}_{ij} \in \reals$ and $\theta_i \in \reals$, for all $i\in I$ and $j\in J$.
\end{assumption}

\section{Mechanism}

A \emph{prediction competition} is any mechanism that elicits expert predictions on some tasks and selects a single winning expert. Standard prediction competitions, which are deterministic, select the winner by scoring expert predictions $W_{ij}$ against outcomes $Y_i$, typically using mean squared error (MSE). To analyze incentives, we will consider an idealized `oracular' prediction competition that scores expert predictions using MSE directly against the ground truth $\theta_i$. We formulate our WOMAC mechanism in a subsequent section building on the insights from this analysis.

\subsection{Reports and Strategy} Let $\mathcal{R}$ be a task's report and reference solution space.\footnote{For continuous outcomes, $\mathcal{R} = \reals$. For discrete outcomes, $\mathcal{R} = [0, 1]$ if elicited as probabilities, or $\mathcal{R} = \reals$ if elicited as log odds.} Expert $j$ employs strategy $s: \reals^m \rightarrow \mathcal{R}$ to strategically report their true prediction $\vec{x}_j$ on the $m$ tasks as $\vec{w}_j = s(\vec{x}_j)$. We denote the $m \times n$ matrix of expert reports as ${\bm W}$. In general, we denote the \emph{reference solution} that expert $j$ is scored against on task $i$ as $t_{ij} \in \mathcal{R}$.   

\subsection{Competition Mechanism}

Now, we define deterministic prediction competition mechanisms, focusing on the standard \emph{outcome-based} competitions and an idealized \emph{oracle-based} competition to analyze.

\begin{definition}[Deterministic Prediction Competition $\mathcal{M}$]
  A deterministic prediction competition mechanism $\mathcal{M}$ is a function mapping $n$ experts' reported predictions $\vec{w}_1, \ldots, \vec{w}_n \in \mathcal{R}^m$ and reference solutions $\vec{t}_1, \ldots, \vec{t}_n \in \mathcal{R}^m$ to a single winner, i.e., $\mathcal{M}: \mathcal{R}^{m \times 2n}  \rightarrow \{1, \ldots, n\}$, with ties broken arbitrarily.\footnote{Since the mechanism must be deterministic in the experts' predictions, the reference solutions ${\bm T}$ may not depend stochastically on the predictions ${\bm W}$.} Expert $j$'s competition outcome under the mechanism is $\mathcal{M}_j: \mathcal{R}^{m \times 2n}  \rightarrow \{0, 1\}$.
\end{definition}

We specifically work with the most common instantiation of deterministic prediction competitions: \emph{MSE-based} prediction competitions, where $\mathcal{M}(\vec{w}_1, \ldots, \vec{w}_n, \vec{t}_1, \ldots, \vec{t}_n) = \text{argmin}_{j \in J} \|\vec{w}_j - \vec{t}\|^2$
and $\mathcal{M}_j = \mathds{1}\left[j = \text{argmin}_{j' \in J} \|\vec{w}_{j'} - \vec{t}\|^2\right]$. Additionally, in our theoretical analysis, all experts will have the same reference solution (either outcome $t_{ij} = Y_i$ or ground truth $t_{ij}=\theta_i$), so we denote the reference solution simply as $t_i \in \mathcal{R}$ for all experts, and consider $\mathcal{M}: \mathcal{R}^{m \times (n+1)}  \rightarrow \{1, \ldots, n\}$ and $\mathcal{M}_j: \mathcal{R}^{m \times (n+1)}  \rightarrow \{0, 1\}$.

\begin{definition}[Standard Prediction Competition $\mathcal{M}^Y$]
  The standard outcome-based competition $\mathcal{M}^Y$ is a deterministic MSE-based prediction competition that uses $t_{i} = y_i$, where $y_i$ is the observed outcome on task $i$.
\end{definition}

\begin{definition}[Oracular Prediction Competition $\mathcal{M}^O$]\label{def:oracle}
  An oracle-based competition $\mathcal{M}^O$ is a deterministic MSE-based prediction competition that uses $t_{i} = \theta_i$, where $\theta_i$ is the ground truth on task $i$.
\end{definition}

\subsection{Utility} The key feature of a competition is that experts tend to care more about their relative performance (e.g., position on a leaderboard) than their absolute performance. A natural way to capture this utility function is that expert $j$'s utility is 1 if they win and 0 if they lose:
\begin{equation}
  U_j(\vec{w}_j, {\bm W}_{-j}, {\bm T}) = \mathcal{M}_j({\bm W}, {\bm T})
\end{equation}
Consequently, the expected utility given reported prediction $\vec{w}_j$ and true prediction $\vec{x}_j$ is the probability of winning, i.e., $\hat{U}(\vec{w}_j, \vec{x}_j) = \mathbb{E}_{{\bm W}_{-j}, \vec{t}|\vec{x}_j}[U_j(\vec{w}_j, {\bm W}_{-j}, \vec{t})] = \text{Pr}(\mathcal{M}_j = 1 | \vec{x}_j)$. Observe that experts' utilities depend on their own report, other experts' reports, as well as the reference solution.

\section{Theoretical Results}

\subsection{Relevant Concepts}
We now cover some essential concepts used in this work. The application of these may not be immediately clear, so we invite the reader to return to this section as they encounter the concepts later in the paper.

First, we make a simple observation: if expert $j$'s predictions $\vec{x}_j$ end up closer to the reference solution $\vec{t}$, it can only help their competition outcome.

\begin{observation}[Monotonicity of $\mathcal{M}$]
  For any expert $j$, if $\|\vec{w}_j - \vec{t}\|^2 < \|\vec{w}'_j - \vec{t}\|^2$, then $\mathcal{M}_j(\vec{w}_j,  \vec{\bm W}_{-j}, \vec{t}) \ge \mathcal{M}_j(\vec{w}_j', \vec{\bm W}_{-j}, \vec{t})$.
\end{observation}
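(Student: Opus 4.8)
The plan is to exploit that $\mathcal{M}_j$ takes values in $\{0,1\}$, so the asserted inequality $\mathcal{M}_j(\vec{w}_j, \vec{\bm W}_{-j}, \vec{t}) \ge \mathcal{M}_j(\vec{w}'_j, \vec{\bm W}_{-j}, \vec{t})$ can fail in only one way: when $\mathcal{M}_j(\vec{w}'_j, \vec{\bm W}_{-j}, \vec{t}) = 1$ yet $\mathcal{M}_j(\vec{w}_j, \vec{\bm W}_{-j}, \vec{t}) = 0$. Every other case (where the worse report $\vec{w}'_j$ already yields outcome $0$) gives the inequality for free, since any $\{0,1\}$-value is $\ge 0$. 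So it suffices to show that if the \emph{worse} report $\vec{w}'_j$ wins, then the \emph{better} report $\vec{w}_j$ wins as well.

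First I would unpack winning for the MSE-based mechanism: $\mathcal{M}_j = 1$ means $j \in \text{argmin}_{j'} \|\vec{w}_{j'} - \vec{t}\|^2$ with $j$ selected by the (arbitrary but fixed) tie-break. From $\mathcal{M}_j(\vec{w}'_j, \vec{\bm W}_{-j}, \vec{t}) = 1$ I obtain $\|\vec{w}'_j - \vec{t}\|^2 \le \|\vec{w}_{j'} - \vec{t}\|^2$ for every competitor $j' \neq j$; this is legitimate because only expert $j$'s report differs between the two scenarios, while the reference $\vec{t}$ and the block $\vec{\bm W}_{-j}$ are held fixed by hypothesis. I would then chain the improvement hypothesis $\|\vec{w}_j - \vec{t}\|^2 < \|\vec{w}'_j - \vec{t}\|^2$ with these competitor bounds to get $\|\vec{w}_j - \vec{t}\|^2 < \|\vec{w}_{j'} - \vec{t}\|^2$ for all $j' \neq j$. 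Hence, under report $\vec{w}_j$, expert $j$ is the unique strict minimizer of the MSE, so $j$ wins outright and $\mathcal{M}_j(\vec{w}_j, \vec{\bm W}_{-j}, \vec{t}) = 1$, which closes the argument.

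I do not expect a genuine obstacle here; the only point requiring care is the tie-breaking convention. The clean resolution is that a strict improvement dissolves any tie: even if $j$ won under $\vec{w}'_j$ solely by virtue of the tie-break, dropping $j$'s error strictly below $\|\vec{w}'_j - \vec{t}\|^2$, which was already weakly below every competitor's error, makes $j$ the strict minimizer, so no tie-break is invoked for $\vec{w}_j$. This is precisely why I set up the competitor comparisons as weak inequalities ($\le$, reflecting a tie-enabled win) and then upgrade them to strict inequalities using the strict improvement hypothesis.
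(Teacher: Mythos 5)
Your proof is correct: the paper states this monotonicity as an unproved observation, and your argument---unpacking the argmin definition of the MSE-based mechanism and noting that the strict improvement hypothesis makes expert $j$ the unique strict minimizer, so no tie-break is ever invoked---is precisely the self-evident reasoning the paper relies on when it invokes the observation inside the proof of Theorem~\ref{thm:oracle}. Your handling of arbitrary tie-breaking (weak inequalities $\|\vec{w}'_j - \vec{t}\|^2 \le \|\vec{w}_{j'} - \vec{t}\|^2$ extracted from the win under $\vec{w}'_j$, then upgraded to strict via the hypothesis) is the one point of substance, and you resolved it correctly.
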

Next, we provide additional properties we will consider.
\begin{definition}[Translation-Invariance]
  We say a function $f(\vec{v}_1, \ldots, \vec{v}_k)$ is translation-invariant if $f(\vec{v}_1, \ldots, \vec{v}_k) = f(\vec{v}_1 + \vec{c}, \ldots, \vec{v}_k + \vec{c})$ for any $\vec{v}_1, \ldots, \vec{v}_k, \vec{c} \in \reals^d$.
\end{definition}

This is equivalently a location-family property for a family of conditional probability densities $\{\mathcal{Q}(\vec{x}|\vec{\theta})\}_{\vec{\theta} \in \reals^m}$ for which there is some fixed probability density function $h: \reals^m \rightarrow \reals$ such that $\mathcal{Q}(\vec{x}|\vec{\theta}) = h(\vec{x} - \vec{\theta})$.

\begin{definition}[Radially Symmetric]
  We say a function $f_{\vec{c}}: \reals^{d \times k} \to \reals$ is radially symmetric at $\vec{c}$ if there is some measurable function $h: \reals^{k} \rightarrow \reals$ such that $f_{\vec{c}}(\vec{v}_1, \ldots, \vec{v}_k) = h(\|\vec{v}_1- \vec{c}\|^2, \ldots, \|\vec{v}_k- \vec{c}\|^2)$ for any $\vec{v}_1, \ldots, \vec{v}_k \in \mathbb{R}^d$, i.e., rotation-invariant in each argument.
\end{definition}

\begin{observation}[MSE-based Prediction Competition Symmetries] $\mathcal{M}_j$ is translation-invariant and radially symmetric  at the reference solution $\vec{t}$.
\end{observation}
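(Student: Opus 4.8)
The plan is to verify the two symmetry claims directly from the closed form $\mathcal{M}_j(\vec{w}_1,\ldots,\vec{w}_n,\vec{t}) = \mathds{1}[\,j = \arg\min_{j'\in J}\|\vec{w}_{j'}-\vec{t}\|^2\,]$, since both properties are really statements about how the squared-distance quantities $\|\vec{w}_{j'}-\vec{t}\|^2$ transform under the relevant operations. No probabilistic structure is needed here; everything follows from elementary identities for the Euclidean norm together with care about the tie-breaking convention.

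For translation-invariance, I would treat the reference solution $\vec{t}$ as one of the function's arguments alongside the $n$ reports, so that $\mathcal{M}_j$ is a map on $(n+1)$ vectors in $\reals^m$ (here $d=m$ in the definition). Shifting every argument by a common $\vec{c}\in\reals^m$ sends $\vec{w}_{j'}\mapsto\vec{w}_{j'}+\vec{c}$ and $\vec{t}\mapsto\vec{t}+\vec{c}$, and the key computation is
\[
  \|(\vec{w}_{j'}+\vec{c}) - (\vec{t}+\vec{c})\|^2 = \|\vec{w}_{j'}-\vec{t}\|^2
\]
for every $j'$. Hence each expert's squared error, and therefore the entire (tie-broken) $\arg\min$ set, is unchanged, so $\mathcal{M}_j$ takes the same value before and after the shift, which is exactly translation-invariance.

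For radial symmetry at $\vec{t}$, I would fix the center $\vec{c} = \vec{t}$ and view $\mathcal{M}_j$ as a function of the $n$ reports only (so $k=n$ in the definition). Setting $a_{j'} := \|\vec{w}_{j'}-\vec{t}\|^2$, I would exhibit the measurable $h:\reals^n\to\reals$ required by the definition as $h(a_1,\ldots,a_n) := \mathds{1}[\,j = \arg\min_{j'} a_{j'}\,]$, using the same index-based tie-breaking convention as $\mathcal{M}_j$. Then by construction $\mathcal{M}_j(\vec{w}_1,\ldots,\vec{w}_n,\vec{t}) = h(\|\vec{w}_1-\vec{t}\|^2,\ldots,\|\vec{w}_n-\vec{t}\|^2)$, which is precisely the claimed radially symmetric form.

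Neither step involves a genuine obstacle; the only point needing care is that the $\arg\min$ together with its tie-breaking rule be a well-defined deterministic function of the squared distances alone, rather than of the raw vectors. I would therefore make explicit that ties are resolved by a fixed rule depending only on expert index, so that $h$ is genuinely a function of $(a_1,\ldots,a_n)$; with that convention in place, both claims follow immediately from the displayed norm identity.
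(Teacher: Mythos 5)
Your proposal is correct and takes essentially the same approach as the paper: the paper states this as an observation with no written proof, treating it as immediate from the definition of an MSE-based competition, and your direct verification---the norm identity $\|(\vec{w}_{j'}+\vec{c})-(\vec{t}+\vec{c})\|^2 = \|\vec{w}_{j'}-\vec{t}\|^2$ for translation-invariance and the explicit factorization $\mathcal{M}_j = h(\|\vec{w}_1-\vec{t}\|^2,\ldots,\|\vec{w}_n-\vec{t}\|^2)$ for radial symmetry---is exactly that implicit argument spelled out. Your point that the ``ties broken arbitrarily'' convention must be refined to a rule depending only on the squared distances (e.g., by expert index) for $h$ to be well-defined is a legitimate and careful touch, not a deviation.
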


\begin{definition}[Strictly Radially Decreasing Function]
  We say a function $f_{\vec{c}}: \reals^{d} \to \reals$ is strictly radially decreasing from $\vec{c}$ if for any $\vec{x}, \vec{y}$ where $\|\vec{x} - \vec{c}\|^2 < \|\vec{y} - \vec{c}\|^2$, we have $f_{\vec{c}}(\vec{x}) > f_{\vec{c}}(\vec{y})$
\end{definition}

The location family, radial symmetry, and strict radial decrease apply to a wide range of probability distributions, including multivariate normal distributions and $t$-distributions. Next, we state the radial monotone likelihood ratio property, helpful for comparing distributions.

\begin{definition}[Radial Monotone Likelihood Ratio Property]
  Given two probability distributions $\mathcal{Q}_1, \mathcal{Q}_2$, we say that $\mathcal{Q}_1$ satisfies a weak radial monotone likelihood ratio property (RMLRP) with respect to $\mathcal{Q}_2$ at $\vec{c}$ if for any $\vec{v}_1, \vec{v}_2$ where $\|\vec{c} - \vec{v}_1\|^2 < \|\vec{c} - \vec{v}_2\|^2$, we have $\frac{\mathcal{Q}_1(\vec{v}_1)}{\mathcal{Q}_2(\vec{v}_1)} \geq \frac{\mathcal{Q}_1(\vec{v}_2)}{\mathcal{Q}_2(\vec{v}_2)}$. If the inequality is strict, the $\mathcal{Q}_1, \mathcal{Q}_2$ satisfy a strict RMLRP.
\end{definition}

This property states that the ratio of two distributions is diminishing monotonically towards the tails. Intuitively, if $\mathcal{Q}_1$ satisfies the MLRP with respect to $\mathcal{Q}_2$ at $\vec{c}$, $\mathcal{Q}_1$ is more `concentrated' than $\mathcal{Q}_2$ close to $\vec{c}$. This property is a helpful way to characterize `precise' and `noisy' distributions.

\begin{example}
  Consider two multivariate normal distributions with mean $\mathcal{N}_1(\vec{\mu}, \tau_1^2\mathbb{I})$, $\mathcal{N}_2(\vec{\mu}, \tau_2^2\mathbb{I})$. If $\tau_1 < \tau_2$, then $\mathcal{N}_1$ satisfies a strict RMLRP with respect to $\mathcal{N}_2$ at $\vec{\mu}$. In other words, the lower variance Gaussian (i.e., more concentrated at the mean) `dominates' the higher variance Gaussian in an RMLRP sense.
\end{example}

Finally, we present the equilibrium concept we use.

\begin{definition}[Bayes-Nash Equilibrium (BNE)]
  A strategy profile ${\bf s} = (s_1, \ldots, s_n)$ is a Bayes-Nash equilibrium if, for every player
  $j$ and every signal $z_{ij}$, given the beliefs about other players' types and strategies, the expected utility of strategy $s_j$ is at least as much as expected utility of playing any alternative strategy $s'_j$. That is, no player can increase their expected utility by unilaterally deviating, conditional on their signal $z_{ij}$.
\end{definition}

\subsection{Incentives under Ground Truth} \label{sec:incentives}

When faced with the examples of poor incentives (see Introduction), the natural question is whether \emph{competitions themselves} are the cause of poor incentives, or if it has to do with the specific choice of scoring against the outcomes $Y_i$. To study this, we consider the `idealized' case where the competition designer has access to the ground truth $\vec{\theta}$, which generates the outcome $\vec{Y}$. Although previous work \citep{monroe2025hedging} has questioned the incentive properties of prediction competitions, our first result shows that an \emph{oracular prediction competition} (Definition \ref{def:oracle}) is incentive-compatible for a wide range of belief structures. This insight will lead to our design of the WOMAC mechanism.

\begin{theorem} \label{thm:oracle}
  Suppose Assumptions \ref{as:iid}-\ref{as:support} hold. Then, truthful reporting ($\vec{w}_j = \vec{{x}}_j$ for every expert $j$) is a Bayes-Nash equilibrium in an oracular prediction competition $\mathcal{M}^O$ if the following conditions hold:

  \begin{enumerate}
    \item every expert's Bayesian posterior belief $\mathcal{P}(\vec{\theta}|\vec{X}_j)$ is \textbf{strictly radially decreasing} from $\vec{X}_j $;
    \item the conditional likelihood of each expert $j$'s expected posterior prediction $f_{\vec{\theta}}(\vec{{x}}_j) = \mathcal{Q}(\vec{{x}}_j | \vec{\theta})$ is \textbf{radially symmetric} at the ground truth $\vec{\theta}$;
    \item the conditional probability densities $\{\mathcal{Q}(\vec{x}|\vec{\theta})\}_{\vec{\theta} \in \reals^m}$ are a \textbf{location family}.
  \end{enumerate}
\end{theorem}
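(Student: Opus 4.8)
The plan is to fix an arbitrary expert $j$, assume every other expert reports truthfully ($\vec{w}_{j'} = \vec{x}_{j'}$), and show that expert $j$'s unique best response to their own prediction $\vec{x}_j$ is to report $\vec{w}_j = \vec{x}_j$; this is exactly the Bayes--Nash condition. Conditioning on $\vec{x}_j$, the posterior over the ground truth is $\mathcal{P}(\vec{\theta}\mid\vec{x}_j)$, and by Assumptions~\ref{as:iid}--\ref{as:cid} the competing $m$-vectors $\vec{x}_{j'}$ are mutually independent given $\vec{\theta}$ and independent of $\vec{x}_j$ given $\vec{\theta}$. Writing the expected utility (win probability) by first conditioning on $\vec{\theta}$ and then integrating out the competitors gives
\[
\hat{U}(\vec{w}_j,\vec{x}_j)=\int \mathcal{P}(\vec{\theta}\mid\vec{x}_j)\, g(\vec{w}_j,\vec{\theta})\,\dd\vec{\theta},\qquad g(\vec{w}_j,\vec{\theta})=\Pr\Bigl(\textstyle\bigcap_{j'\neq j}\{\|\vec{x}_{j'}-\vec{\theta}\|^2\ge\|\vec{w}_j-\vec{\theta}\|^2\}\,\Bigm|\,\vec{\theta}\Bigr),
\]
where ties contribute probability zero under full support (Assumption~\ref{as:support}).

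The next step is to reduce $g$ to a single fixed radial function. By conditional independence, $g(\vec{w}_j,\vec{\theta})=\prod_{j'\neq j}\bar{F}_{j'}\bigl(\|\vec{w}_j-\vec{\theta}\|^2\bigr)$, where $\bar{F}_{j'}$ is the survival function of $R_{j'}=\|\vec{x}_{j'}-\vec{\theta}\|^2$. Condition~2 (radial symmetry of $\mathcal{Q}$ at $\vec{\theta}$) makes each $R_{j'}$ depend on the competitor's draw only through its radius, and Condition~3 (location family) makes the law of $R_{j'}$ the \emph{same} for every $\vec{\theta}$. Together they yield $g(\vec{w}_j,\vec{\theta})=g_0(\|\vec{w}_j-\vec{\theta}\|^2)$ for one fixed, strictly decreasing $g_0$ (strictness from full support). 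Thus, as a function of $\vec{\theta}$, $g$ is radially symmetric and strictly decreasing from $\vec{w}_j$, and is the \emph{same} function wherever $\vec{w}_j$ lies. This is precisely where Condition~3 earns its keep: without it the survival functions would carry a residual dependence on $\vec{\theta}$ and the translation structure exploited below would collapse.

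I would then recognize the remaining integral as a convolution. Setting $\tilde{p}(\vec{u})=\mathcal{P}(\vec{x}_j+\vec{u}\mid\vec{x}_j)$ and $\tilde{g}(\vec{u})=g_0(\|\vec{u}\|^2)$—both radially symmetric about the origin, $\tilde{p}$ strictly decreasing by Condition~1 and $\tilde{g}$ decreasing as just shown—the substitution $\vec{u}=\vec{\theta}-\vec{x}_j$ and the evenness of $\tilde{g}$ give $\hat{U}(\vec{w}_j,\vec{x}_j)=(\tilde{p}*\tilde{g})(\vec{w}_j-\vec{x}_j)$. It then suffices to prove that the convolution of two radially symmetric decreasing functions is itself radially symmetric and decreasing, with a strict maximum at shift $\vec{0}$—equivalently, that truthful reporting $\vec{w}_j=\vec{x}_j$ uniquely maximizes the win probability.

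The hard part is this last, geometric step. My approach is a layer-cake representation: write $\tilde{p}$ and $\tilde{g}$ as integrals of indicators of their superlevel sets, which (being radially decreasing) are balls centered at the origin. Then $(\tilde{p}*\tilde{g})(\vec{s})$ becomes a double integral over radii of the volume of intersection of two balls whose centers are offset by $\vec{s}$, and this intersection volume is radially decreasing in $\|\vec{s}\|$, strictly so whenever both radii are positive. The strict radial decrease of the posterior (Condition~1) guarantees positive-radius balls on a set of positive measure, upgrading the maximum to a \emph{strict} one at $\vec{s}=\vec{0}$, i.e.\ $\vec{w}_j=\vec{x}_j$. I would invoke the Riesz rearrangement inequality (or cite the explicit two-ball intersection computation) for this rather than reprove it from scratch; establishing strictness cleanly, together with verifying that Conditions~2--3 jointly deliver the fixed radial $g_0$, is the only genuinely delicate obstacle, as the rest is bookkeeping built on the Observation that $\mathcal{M}_j$ is translation-invariant and radially symmetric at the reference solution.
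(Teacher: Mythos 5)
Your proposal is essentially correct, but it takes a genuinely different route from the paper. The paper fixes two candidate reports $\vec{u},\vec{v}$ with $\|\vec{u}-\vec{x}_j\|^2<\|\vec{v}-\vec{x}_j\\|^2$, splits $\reals^m$ into the two half-spaces determined by the bisecting hyperplane of $\vec{u}$ and $\vec{v}$, and pairs each $\vec{\theta}$ with its mirror image $\vec{\theta}'$: radial symmetry plus the location-family property make the competitors' likelihood reflection-invariant, so the gain function satisfies $g(\vec{u},\vec{v},\vec{\theta}')=-g(\vec{u},\vec{v},\vec{\theta})$, and the strictly radially decreasing posterior then puts strictly more mass on the favorable half-space. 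Your argument instead integrates out the competitors first: conditional independence and the location family collapse the win probability to a single fixed radial function $g_0(\|\vec{w}_j-\vec{\theta}\|^2)$, turning the expected utility into a convolution $(\tilde{p}*\tilde{g})(\vec{w}_j-\vec{x}_j)$ of two radially decreasing profiles, which is maximized at zero shift by a rearrangement/ball-intersection argument. A notable payoff of your route is that Condition~2 (radial symmetry of $\mathcal{Q}$) is never actually used: the event $\{\|\vec{x}_{j'}-\vec{\theta}\|^2\ge c\}$ depends on $\mathcal{Q}_{j'}(\cdot|\vec{\theta})$ only through the law of the radius, which needs no symmetry, and the location family alone makes that law $\vec{\theta}$-free. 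So your argument proves a (slightly) stronger statement, whereas the paper's reflection step needs radial symmetry essentially. Your route also cleanly exploits conditional independence via the product of survival functions, which the paper's proof never needs to make explicit.

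One step does need repair: your claim that the two-ball intersection volume is strictly decreasing in the center distance ``whenever both radii are positive'' is false. If one ball is nested in the other, i.e.\ $\|\vec{s}\|\le|r_t-r_\tau|$, the intersection volume equals the volume of the smaller ball and is locally constant in $\|\vec{s}\|$; strict decrease holds only on the range $|r_t-r_\tau|<\|\vec{s}\|<r_t+r_\tau$. Consequently, ``positive-radius balls on a set of positive measure'' is not by itself enough for strictness at a given shift $\vec{s}\neq\vec{0}$; you need a positive-measure set of level pairs $(t,\tau)$ whose superlevel radii are \emph{comparable}, namely $|r_t-r_\tau|<\|\vec{s}\|$. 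This is where strict radial decrease of \emph{both} profiles earns its keep: for any $0<\rho_1<\rho_2$, strict decrease of the posterior profile gives measure $\tilde{p}(\rho_1)-\tilde{p}(\rho_2)>0$ to the levels $t$ with $r_t\in[\rho_1,\rho_2)$, and strict decrease of $g_0$ (which you correctly get from full support of the competitors' densities) does the same for $\tau$; choosing $\rho_2-\rho_1<\|\vec{s}\|$ produces the required positive-measure set of pairs, and the strict inequality follows. With that patch, and noting that superlevel sets of a strictly radially decreasing function are balls up to measure zero (the definition does not force exact symmetry on spheres, but this does not affect volumes), your proof goes through.
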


\paragraph{Intuition for Theorem \ref{thm:oracle}} This theorem states that for a wide category of signal distributions $\mathcal{Q}(\vec{x}_j|\vec{\theta})$ and Bayesian posteriors over the ground truth $\mathcal{P}(\vec{\theta}|\vec{X}_j)$, scoring experts against the ground truth and picking the expert with the best score is Bayes-Nash incentive-compatible.

A natural question is what kinds of distributions actually have these properties, and how plausible they are in practice. Note that the theorem applies to a very common model: normally distributed signals centered at $\vec{\theta}$, with a flat prior over the ground truth $\vec{\theta}$. This is a very realistic setting as prediction tasks contain significant context, so experts typically reason about the specific task at hand instead of adjusting predictions towards a competition prior. This theorem generically applies under any flat prior and Gaussian-like signal likelihood--a realistic and fairly broad range of distributions including $t$-distributions and exponential distributions.

So, what can we conclude from this result? First, a basic implication is that when the observation \emph{is} the ground truth (i.e., noiseless outcomes with $Y_i = \theta_i$), standard prediction competitions that score against observation are incentive-compatible. Note that this is only possible for continuous real-valued outcomes. Discrete outcomes are \emph{always} a noisy view of the continuous, real-valued ground truth $\theta_i \in \reals$, i.e., $Y_i \sim \text{Bernoulli}(\sigma(\theta_i))$. Second, it establishes that competitions are not \emph{intrinsically} untruthful mechanisms, but depend on \emph{how} experts are scored. In particular, our analysis shows that for discrete outcomes, scoring against continuous, real-valued ground truth is IC even though scoring against the binary outcome $Y_i$ is not.

Naturally, the competition designer is unlikely to have access to the ground truth $\theta_i$ in practice. To overcome this, our WOMAC mechanism will use $\theta_i^s$, an estimate the ground truth using the distribution of experts' predictions—information that was discarded by the standard prediction competition.

\subsection{Efficiency}

In addition to the question of incentives, an equally important consideration is whether competitions can accurately identify top performers with high probability. Our WOMAC mechanism will use a learned estimate $\theta_i^s$ of the ground truth, which we can model as being a noisy draw from the distribution $\mathcal{T}(\theta_i^s|\theta_i)$. Thus, we are also interested in how the noise in the estimator $\mathcal{T}(\vec{\theta}^s|\vec{\theta})$ affects the probability of identifying the best expert.

We can use an RMLRP notion of best expert: we say $j^*$ is the best expert if  $\frac{\mathcal{Q}_{j^*}(\vec{x}|\vec{\theta})}{\mathcal{Q}_{j}(\vec{x} |\vec{\theta})} > \frac{\mathcal{Q}_{j^*}(\vec{y}|\vec{\theta})}{\mathcal{Q}_{j}(\vec{y} |\vec{\theta})}$ for any $j \neq j^*$ with $\|\vec{x}-\vec{\theta}\|^2 < \|\vec{y}-\vec{\theta}\|^2$.\footnote{An interesting question is whether the expert with the highest probability of winning coincides with the expert with the best MSE with respect to the oracle. We conjecture that this true, but leave a rigorous investigation for future work.} Intuitively, the best expert's predictions are always more tightly concentrated around the ground truth than any other expert's. We also use an RMLRP notion of better estimator, i.e., a less noisy estimator is more concentrated near the ground truth. Then, a \emph{basic question} is if less noise in the estimate is strictly better for identifying the best expert. Happily, the answer is yes:

\begin{theorem}\label{thm:eff}
  Suppose Assumptions \ref{as:iid}-\ref{as:support} hold. Let $\{\mathcal{Q}_j(\vec{x}_j|\vec{\theta}\}_{j\in J}$ be \textbf{radially symmetric} at $\vec{\theta}$. Additionally, let $\mathcal{T}_A(\vec{\theta}^s|\vec{\theta}), \mathcal{T}_B(\vec{\theta}^s|\vec{\theta})$ be conditional distributions that satisfy the following:
    \begin{enumerate}
      \item They are \textbf{radially symmetric} at $\vec{\theta}$;
      \item They are \textbf{strictly radially decreasing} from $\vec{\theta}$;
      \item $\mathcal{T}_A$ satisfies a strict \textbf{radial monotone likelihood ratio property} with respect to $\mathcal{T}_B$ at $\vec{\theta}$.
    \end{enumerate}
    Then, the best expert $j^*$ has a strictly higher probability of winning a competition when $\vec{\theta}^s \sim \mathcal{T}_A$ than when $\vec{\theta}^s \sim \mathcal{T}_B$ under every ground truth $\vec{\theta} \in \reals^m$.
  \end{theorem}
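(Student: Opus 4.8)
The plan is to fix truthful reporting (so $\vec w_j = \vec x_j$, as is appropriate for the efficiency question), condition on the realized estimate $\vec\theta^s$, and express the best expert's win probability as an average of a conditional win probability against the spread of the estimator. Define $g(\vec\theta^s) := \Pr\bigl(\|\vec x_{j^*} - \vec\theta^s\|^2 < \|\vec x_j - \vec\theta^s\|^2 \ \forall j\neq j^* \mid \vec\theta\bigr)$, the probability that $j^*$ is the closest report to the reference $\vec\theta^s$, where the $\vec x_j \sim \mathcal Q_j(\cdot\mid\vec\theta)$ are independent by Assumption~\ref{as:cid} and taken independent of $\vec\theta^s$ given $\vec\theta$ (modeling the estimate as an exogenous noisy view of $\vec\theta$). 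Then the win probability under estimator $\mathcal T$ is $\int g(\vec\theta^s)\,\mathcal T(\vec\theta^s\mid\vec\theta)\,\dd\vec\theta^s$, and the theorem reduces to showing this integral is strictly larger for $\mathcal T_A$ than for $\mathcal T_B$.

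The argument then rests on two structural facts about $g$ plus a stochastic-dominance step. First, $g$ is radially symmetric about $\vec\theta$: rotating $\vec\theta^s$ about $\vec\theta$ and changing variables $\vec x_j \mapsto$ its rotation about $\vec\theta$ leaves each $\mathcal Q_j$ invariant (radial symmetry) and preserves every distance $\|\vec x_j - \vec\theta^s\|$, so $g(\vec\theta^s) = \tilde g(\|\vec\theta^s - \vec\theta\|^2)$ for some $\tilde g$. Second, and this is the crux, $\tilde g$ is strictly decreasing, i.e. $g$ is \emph{strictly radially decreasing} from $\vec\theta$: pulling the reference toward the ground truth strictly helps the most concentrated expert.

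For the strict-decrease step the clean core is the two-expert case. Translate so $\vec\theta = 0$ and, using radial symmetry, place $\vec\theta^s = t\vec e$ on a ray. Conditioning on the radii $(\|\vec x_{j^*}\|, \|\vec x_j\|) = (p,q)$ and writing $\vec x = r\vec u$ with $\vec u$ uniform on the sphere, the event that $j^*$ is closer becomes $2tW > p^2 - q^2$ with $W = pU - qV$, where $U,V$ are the i.i.d.\ symmetric axial projections. The key identity is that the radius-swapped configuration $(q,p)$ gives $W' = qU - pV \stackrel{d}{=} -W \stackrel{d}{=} W$, so the two conditional win probabilities $\Pr(2tW > p^2-q^2)$ and $\Pr(2tW > q^2-p^2)$ sum to one; pooling the pair $(p,q)$ and $(q,p)$ weighted by their radial densities and using that the strict RMLRP makes the ``closer'' configuration ($p<q$) strictly heavier than its swap, the combined contribution equals a positive constant plus a strictly positive multiple of $\Pr(W > -(q^2-p^2)/2t)$, which is strictly decreasing in $t$. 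Integrating over radius pairs yields $\tilde g$ strictly decreasing. I expect the main obstacle to be lifting this to $n>2$ experts, where the winning event is an intersection coupled through $j^*$'s own realization: the plan is to condition additionally on $\vec x_{j^*}$ (rendering the competitor comparisons conditionally independent) and run the same $j^*$-versus-competitor radius-swap symmetrization, but the bookkeeping across the product of $n-1$ comparisons is the delicate part of the proof.

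Finally, with $g$ radially symmetric and strictly radially decreasing, pass to the radial laws of $R = \|\vec\theta^s - \vec\theta\|$ under $\mathcal T_A$ and $\mathcal T_B$. Writing each estimator density as $\phi_\bullet(\|\vec\theta^s-\vec\theta\|^2)$, the radial density is proportional to $\phi_\bullet(r^2)\,r^{m-1}$, so their ratio is $\phi_A(r^2)/\phi_B(r^2)$, which the strict RMLRP forces to be strictly decreasing in $r$; this strict monotone-likelihood-ratio ordering of the radial laws yields strict first-order stochastic dominance, with $R$ stochastically smaller under $\mathcal T_A$, and the strict radial-decrease assumption on $\mathcal T_A,\mathcal T_B$ guarantees both laws place mass where $\tilde g$ is actively decreasing. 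Integrating the bounded, strictly decreasing $\tilde g$ against this strict dominance gives $\mathbb E_A[\tilde g(R)] > \mathbb E_B[\tilde g(R)]$, i.e. the best expert wins with strictly higher probability under $\mathcal T_A$, uniformly over every ground truth $\vec\theta$.
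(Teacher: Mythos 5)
Your overall architecture matches the paper's: you reduce the theorem to showing that the best expert's win probability $g(\vec{\theta}^s)$ (the paper's $f_{j^*}(\vec{\theta}^s,\vec{\theta})$ in its Lemma) is radially symmetric about and strictly radially decreasing from $\vec{\theta}$, and then integrate $g$ against $\mathcal{T}_A - \mathcal{T}_B$. Your final step (strict MLR of the radial laws implies strict first-order stochastic dominance, then integrate a bounded strictly decreasing function) is a valid substitute for the paper's single-crossing decomposition around the unique intersection radius $T$. The two-expert core of your monotonicity argument is also essentially sound: the radius-swap symmetrization, weighting the pair $(p,q)$ against $(q,p)$ and signing the difference $\rho_{j^*}(p)\rho_j(q)-\rho_{j^*}(q)\rho_j(p)$ by the strict RMLRP that defines the best expert, is an integral-form cousin of the paper's swap argument.

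The genuine gap is exactly the step you flag and defer: lifting strict radial decrease of $g$ from $n=2$ to general $n$. This is not bookkeeping; it is where the main work of the paper's Lemma lives, and your proposed fix (condition on $\vec{x}_{j^*}$, then redo the radius swap against each competitor) fails as stated. Once you condition on $\vec{x}_{j^*}$, you cannot exchange $j^*$'s radius with competitor $j$'s radius without breaking the conditioning, because $j^*$'s radius enters all $n-1$ comparison events simultaneously; the swap therefore perturbs every other factor and destroys the ``constant plus positive multiple of a decreasing function'' structure. Nor can a termwise monotonicity argument work: for fixed $r=\|\vec{x}_{j^*}-\vec{\theta}^s\|$, each factor $h_j(t,r)$ (the probability that competitor $j$ lands outside the ball of radius $r$ centered at distance $t$ from $\vec{\theta}$) is typically \emph{increasing} in $t$, while $r$ itself is stochastically increasing in $t$, so the sign of the net effect comes from a correlation between competing forces. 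The paper resolves this by differentiating $f_{j^*}$ in the radial direction and applying the product rule, so each summand isolates a single competitor $j$; radial symmetry makes every remaining factor $h_{j'}$ a function of $r$ alone, so the weight $h/h_j$ factors out of the spherical surface integrals, leaving a two-body surface integral over the positions of $j^*$ and $j$ on the common sphere $\{\vec{x}: \|\vec{x}-\vec{\theta}^s\|=r\}$, which is then symmetrized under swapping those two positions and signed by the RMLRP. To complete your route you would need an analogous device rendering the other $n-2$ comparisons invariant under your swap; without it, your proof is complete only for $n=2$.
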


  \paragraph{Intuition for Theorem \ref{thm:eff}} This result says that scoring against the more precise estimate of the ground truth (in an RMLRP sense) identifies the best expert with higher probability than scoring against the less precise estimate of the ground truth. The takeaway from this theorem is that competitions should always use the least noisy estimate of the ground truth possible, \emph{even in the absence of strategic agents}, in order to maximize the chances of identifying the best expert. Additionally, this lets us combine two results: (1) by Theorem \ref{thm:eff}, a more precise estimator can identify the best expert with higher probability than a less precise estimator given the same number of tasks $m$; (2) increasing the number of tasks $m$ can increase the probability of identifying the best expert \citep{frongillo2021efficient} with a noisy estimator. This implies that more precise estimators require \emph{fewer tasks} to identify the best expert with the same probability as a less precise estimator, i.e., more precise estimators are also more efficient. Thus, if we can arrive at a more precise estimate of $\theta_i$ (from $\mathcal{T}(\theta_i^s|\theta_i)$ than the observation $Y_i$, it offers a higher probability of selecting the best expert.

    \begin{corollary}[Efficiency]
      If an estimator $\vec{\theta}^s \sim \mathcal{T}$ satisfies the strict radial monotone likelihood ratio property with respect to the observed outcome's distribution $\mathcal{Q}(\vec{Y}|\vec{\theta})$, then using the estimate as reference solution $\vec{t} = \vec{\theta}^s$ can identify the best expert with higher probability with strictly fewer tasks than using the outcome as reference solution $\vec{t} = \vec{y}$.
    \end{corollary}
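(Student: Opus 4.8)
The plan is to obtain the corollary by chaining two facts: the fixed-horizon dominance supplied by Theorem \ref{thm:eff}, and the monotonicity of the detection probability in the number of tasks established by \citet{frongillo2021efficient}. Write $P(m;\mathcal{T})$ for the probability that the best expert $j^*$ wins a competition with $m$ tasks when the reference solution is $\vec{t}=\vec{\theta}^s\sim\mathcal{T}$, and $P(m;Y)$ for the same probability when $\vec{t}=\vec{y}\sim\mathcal{Q}(\vec{Y}|\vec{\theta})$. First I would instantiate Theorem \ref{thm:eff} with $\mathcal{T}_A=\mathcal{T}$ and $\mathcal{T}_B=\mathcal{Q}(\vec{Y}|\vec{\theta})$: the strict radial monotone likelihood ratio property is exactly the corollary's hypothesis, and I would note that the remaining conditions (radial symmetry at $\vec{\theta}$ and strict radial decrease from $\vec{\theta}$) hold in the continuous-observation regime where $\mathcal{Q}(\vec{Y}|\vec{\theta})$ is Gaussian-like, so that both reference distributions satisfy the theorem's premises at every horizon $m$. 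The conclusion is $P(m;\mathcal{T}) > P(m;Y)$ for every $m$ and every $\vec{\theta}\in\reals^m$.

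Next I would bring in monotonicity: under Assumptions \ref{as:iid}--\ref{as:support}, \citet{frongillo2021efficient} shows that adding IID tasks cannot decrease (and eventually drives to $1$) the probability of selecting the best expert, so $P(\cdot;Y)$ is nondecreasing in $m$ with $P(m;Y)\to 1$. With these two ingredients the efficiency statement becomes a short sample-complexity argument. Define $m_Y(p)=\min\{m : P(m;Y)\ge p\}$ and $m_{\mathcal{T}}(p)=\min\{m : P(m;\mathcal{T})\ge p\}$ as the minimal number of tasks each reference needs to identify $j^*$ with probability at least $p$. For the weak direction, set $m^\star=m_Y(p)$; then $P(m^\star;\mathcal{T}) > P(m^\star;Y)\ge p$ by the dominance, so $m^\star$ already suffices for $\mathcal{T}$ and $m_{\mathcal{T}}(p)\le m_Y(p)$.

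To upgrade to a \emph{strict} saving I would exhibit a target that separates the two. Fix any $m^\star$ with $P(m^\star;\mathcal{T})<1$ (true for any finite horizon outside the degenerate case where $j^*$ wins deterministically) and take the target $p=P(m^\star;\mathcal{T})$. Then $m_{\mathcal{T}}(p)\le m^\star$. Meanwhile $P(m^\star;Y) < p$, and by monotonicity no horizon $m\le m^\star$ reaches $p$, while $P(m;Y)\to 1 > p$ guarantees some finite horizon does; hence $m_Y(p)\ge m^\star+1$. Combining, $m_{\mathcal{T}}(p)\le m^\star < m_Y(p)$: the estimator $\vec{\theta}^s$ reaches the same detection probability with strictly fewer tasks, which is precisely the claim.

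The main obstacle I anticipate is not the combination argument above but ensuring Theorem \ref{thm:eff} is legitimately applicable at \emph{every} horizon $m$. The theorem's hypotheses (radial symmetry, strict radial decrease, and strict RMLRP) are imposed on the full $m$-dimensional reference distributions, whereas the natural primitive is a per-task comparison; since radial symmetry in $\reals^m$ and the RMLRP are not in general preserved by products of IID coordinates (they are for Gaussians, where the product depends only on $\sum_i(v_i-\theta_i)^2$, but fail for heavier-tailed radial families), I would need Assumptions \ref{as:iid} and \ref{as:cid} together with the Gaussian-like structure of $\mathcal{Q}(\vec{Y}|\vec{\theta})$ to certify that the $m$-fold joint distributions retain the required properties uniformly in $m$. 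A secondary, purely technical point is the discreteness of $m$: a strict gap cannot hold for every target $p$, which is why the argument picks the specific separating target $p=P(m^\star;\mathcal{T})$ rather than asserting strictness uniformly.
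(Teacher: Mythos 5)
Your proposal is correct and takes essentially the same route as the paper: the paper justifies this corollary exactly by combining Theorem \ref{thm:eff}'s fixed-$m$ dominance ($P(m;\mathcal{T}) > P(m;Y)$ for every $m$) with the task-monotonicity of the detection probability cited from \citet{frongillo2021efficient}, which are precisely your two ingredients. Your extra care---noting that $\mathcal{Q}(\vec{Y}|\vec{\theta})$ must also be radially symmetric and strictly radially decreasing for Theorem \ref{thm:eff} to apply, and choosing the separating target $p = P(m^\star;\mathcal{T})$ to obtain a strict gap despite the discreteness of $m$---only fills in details that the paper leaves implicit.
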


    \section{WOMAC: From Theory to Practice}\label{sec:womac}

    In the background section, we observed that scoring experts against a very noisy view of ground truth $\vec{\theta}$ incentivizes them to misreport their predictions. In our theoretical results, we established that scoring experts directly against the perfectly noiseless ground truth $\vec{\theta}$ ensures truthful reporting is Bayes-Nash incentive compatible. We also established that even if we score experts against  a merely less noisy estimate of $\theta_i$ relative to the observations $\vec{Y}_i$, we can identify the best expert with higher probability.

    These results point in one direction: for both incentives and efficiency of identifying top-performers, we want to score experts' predictions against the lowest variance unbiased estimate of the ground truth $\vec{\theta}$ that we can. Hence, the question: how can we obtain a more precise estimate of $\vec{\theta}$ than the available outcomes $\vec{Y}$? The core insight is that \emph{there is information in the distribution} of experts' reports, and we should use it instead of throwing it away.

    Indeed, given a `training' set of expert predictions ${\bm W}^{\text{train}}$ and outcomes $\vec{y}$, we can compute a `denoised' estimate of the ground truth; we fit a predictive denoising model $f_{\beta}(\cdot)$ with parameter $\beta$ by minimizing $\|\vec{y} - f_{\beta}({\bm W}^{\text{train}})\|^2$ and estimate $\vec{\theta}^s = f_\beta({\bm W}^{\text{test}})$ of the ground truth. We are essentially learning how to weight experts' reports on the training set, and applying those weights to experts' reports on the test set; this is similar to `meta-learning' or an ensemble method like stacking. Learning to optimally weight experts with $f_\beta$ is the core subroutine of our mechanism, so in this sense, the estimate of the ground truth is the \emph{wisdom of the most accurate crowd} (WOMAC). In most practical settings, this will be a lower variance unbiased estimate of the ground truth $\vec{\theta}$ compared to the outcomes $\vec{Y}$.

    Where do we obtain past predictions ${\bm W}^{\text{train}}$ and current predictions ${\bm W}^{\text{test}}$ in the context of a prediction competition? We propose computing a per-expert, per-task jackknifed `training set' to make efficient use of data and preserve determinism, so we use $({\bm W}^{\text{train}})_{ij} = {\bm W}_{-ij}$ and $({\bm W}^{\text{test}})_{ij} = \vec{w}_{i, -j}$. See Algorithm \ref{alg:algorithm} for the full mechanism.

    \begin{definition}[WOMAC $\mathcal{M}^{W}$]
      WOMAC is a deterministic MSE-based prediction competition that uses the per-expert per-task reference solution $t_{ij} = f_{\beta_{ij}}(\vec{w}_{i, -j})$ where $\beta_{ij} = \text{argmin}_{\beta'} \|\vec{y}_{-i} - f_{\beta'}({\bm W}_{-ij})\|^2$.
    \end{definition}

    \paragraph{Practical Details} A natural implementation of WOMAC would be to fit a linear/logistic regression model. While this works well when $m > n$, prediction competitions typically have more experts than predictions ($n > m$). In such cases, we can use marginal feature screening to select a smaller effective $n$. In our preferred implementation, experts are scored against the simple average of the peers in the top-$(k\cdot 100)$ percent (as judged against realized outcomes $\vec{y}$).

    \begin{equation}\label{eq:weights}
      (\beta_{ij})_{j'} =
      \begin{cases}
        \frac{1}{k(n-1)} & \frac{|\{\ell: \|\vec{y} - \vec{w}_\ell\|^2 < \|\vec{y} - \vec{w}_{j'}\|^2, \,\, \ell \neq j\}|}{n-1} < k\\
        0 & \frac{|\{\ell: \|\vec{y} - \vec{w}_\ell\|^2 < \|\vec{y} - \vec{w}_{j'}\|^2, \,\, \ell \neq j\}|}{n-1} \geq k
      \end{cases}
    \end{equation}

    One question may come from the WOMAC reference solution being a function of peers' predictions. Specifically, one might worry that the mechanism devolves into a game of predicting other experts' predictions, instead of reporting one's true predictions. We emphasize that such issues are unlikely to arise in our context, because we \emph{select peers based on performance against realized outcomes}--an external signal. This guards the reference solution from the influence of weaker, noisier experts. Intuitively, expert $j$ can be assured that their reference solution on task $i$ will be weighted towards only those peers who were most accurate on tasks $-i$ (and away from noisier, more biased experts); these experts should ex-ante also be generally accurate on task $i$ when tasks are independent. Thus, experts expect to be scored not against the realized outcome $Y_i$ but against the \emph{best possible aggregate} predictions given outcomes $\vec{Y}$.

    \section{Experiments}\label{sec:experiments}
    We now turn our attention to empirically evaluating the performance of WOMAC. We apply the WOMAC mechanism (applying Equation \ref{eq:weights} and tuning hyperparameter $k$ \emph{in-sample} to minimize MSE against outcomes) to score and rank participants in two judgmental forecasting competitions, where people compete to make predictions about YES/NO future events. We compare experts' performance under WOMAC with their performance under the standard prediction competition. Although the analysis of historical data cannot show WOMAC's improved incentive properties, we can analyze the statistical benefits.

    \begin{figure*}
      \centering
      \includegraphics[scale=0.65]{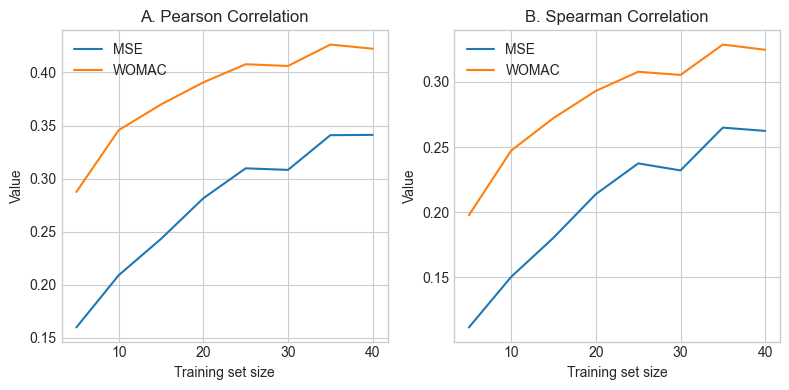}
      \caption{\textbf{[ACX Data] Plot of correlations (y-axis) between score on in-sample tasks and MSE on out-of-sample tasks, varying number of in-sample tasks (x-axis).} We construct 150 sub-samples of the full dataset of a given size (x-axis) as `training sets'. On each sub-sampled training set, we compute each expert's \textcolor{orange}{WOMAC score (orange)} and \textcolor{blue}{MSE score (blue)}, as well as the MSE on a held-out set. We compute the Pearson (L) and Spearman (R) correlations between experts' in-sample score and out-of-sample MSE score. We plot average correlation across 150 sub-samples for a given in-sample task size. The gap between the curves shows that experts' in-sample WOMAC scores are more correlated with their MSE on a held-out dataset, relative to their in-sample MSE scores, i.e., WOMAC scores are more predictive of future performance than MSE scores.}
      \label{fig:oos-acx}
    \end{figure*}

    \paragraph{Data} We analyze two datasets. First, we analyze a judgmental forecasting tournament on binary future events aimed at identifying top human forecasters, the 2023 Astral Codex Ten (ACX) Prediction Contest \citep{alexander2023acx}. In the ACX competition, we consider only forecasters who respond to all tasks, resulting in $50$ tasks and $1683$ forecasters. Second, we analyze data from the Intelligence Advanced Research Projects Activity's (IARPA's) Hybrid Forecasting Competition (HFC) program \citep{hilliard2020hfc}. We filter this data to tasks with at least $250$ responses and then filter to respondents who completed at least $50\%$ of the remaining tasks, focusing on questions with binary answers. We begin with $61$ tasks and $4243$ forecasters and end with $58$ tasks and $304$ users who met our activity threshold. We impute missing data with the mean of the realized outcomes. 

    \paragraph{Methodology} Our main analysis is to identify whether experts' `WOMAC score' on a given set of prediction tasks predicts their MSE on held-out prediction tasks \emph{better than} their `MSE score' on the same prediction tasks. An expert's `WOMAC score' is their MSE with respect to the best ex-post aggregate prediction learned through WOMAC (per Equation \ref{eq:weights}), and an expert's `MSE score' is their MSE with respect to the realized outcomes. This approach tells us how good WOMAC is at predicting experts' out-of-sample scores and ranks, relative to the standard MSE. Consistently higher correlation with WOMAC for the same amount of data would show that WOMAC is more efficient at scoring and ranking experts according to their skill. We conduct this analysis as follows:
    \begin{enumerate}
      \item For each $m_{\text{train}} \in \{5, 10, 15, \ldots, 40\}$, we take the full dataset and construct 150 randomly sub-sampled `training sets' of size $m_{\text{train}}$ and a corresponding randomly sub-sampled held-out `test-set' with size  $m_{\text{test}} = 10$.
      \item On each $(m_{\text{train}},\, m_{\text{test}}, d)$ sub-sampled dataset with $d=1,\ldots, 150$, we calculate each forecaster's in-sample WOMAC score and in-sample MSE score on the train set, and out-of-sample MSE score on the test set.
      \item For each $(m_{\text{train}},\, m_{\text{test}})$, compute $d$ Pearson and Spearman correlations between: (a) in-sample WOMAC scores and the corresponding out-of-sample MSE scores; (b) in-sample MSE score and the corresponding out-of-sample MSE scores.
      \item For each $(m_{\text{train}},\, m_{\text{test}})$, plot the average of $d$ Pearson and Spearman correlations.
    \end{enumerate}

    \paragraph{Results} On both the ACX data (see Figure \ref{fig:oos-acx}) and HFC data (see Appendix), in-sample WOMAC scores exhibit substantially better Pearson and Spearman correlation with out-of-sample MSE scores, compared to in-sample MSE scores, for a given dataset size. This implies that WOMAC is more efficient and reliable at identifying forecaster skill, compared to the standard mechanism. We also study the impact of tuning hyperparameter $k$ (see Appendix) and find taking average of $\sim$top-$5\%$ minimizes MSE against realized outcomes, though the exact choice of $k$ does not affect the performance of WOMAC very much. Finally, we also examine the impact of the number of forecasters $n$, which affects the quality of the reference solution computed from the top-$k$ (see Appendix). We consider random subsets of $100$, $400$, $1000$ and all $1683$ forecasters.  We find that the number of competitors does not make a substantial difference; this is not surprising since the average of the top-$k$ is likely already fairly good even with fewer competitors.

    \bibliography{aaai2026}


    \appendix

    \onecolumn

    \section{Assumptions}\label{sec:assump}

    We briefly discuss our assumptions:

    \begin{enumerate}
      \item \textbf{Tasks are IID}: This assumption states that tasks a priori similar and knowledge of the ground truth on one task says nothing about the ground truth on another task, and lets us reason about tasks independently. Depending on the context, this may or may not hold exactly in practice. However, there is a simple solution: when explicit clusters are available, we can fit a separate WOMAC model for each cluster of tasks (where this condition does hold). If explicit clusters are not available, matrix factorization methods can be used to reveal the underlying cluster structure. Thus, the simplifying assumption is primarily made for ease of analysis, and there are straightforward practical solutions when it doesn't strictly hold.
      \item \textbf{Conditional Independence}: This assumption states that given the ground truth, knowing one expert's prediction is not informative about another expert's prediction. This is a very common simplifying assumption in information elicitation literature, and essential for tractable analysis. Specifically, it allows us to write $\mathcal{Q}({\bm X}|\vec{\theta}) = \prod_j \mathcal{Q}(\vec{X}_j|\vec{\theta})$, and thus reason about each expert's prediction independently (for any fixed $\theta$).
      \item \textbf{Full Support}: This is the assumption that any belief is possible, even if it has vanishing probability. This assumption is used directly in our proofs that reason about the \emph{chance} that deviating to a different report makes a difference; if any report is possible, there is always some non-zero chance that a deviation makes a difference. This lets us cleanly handle the expected utility from any deviation.
    \end{enumerate}

    \section{Proofs}

    \textbf{Theorem \ref{thm:oracle}.}
    \emph{Suppose Assumptions \ref{as:iid}-\ref{as:support} hold. Then, truthful reporting ($\vec{w}_j = \vec{{x}}_j$ for every expert $j$) is a Bayes-Nash equilibrium in an oracular prediction competition $\mathcal{M}^O$ if the following conditions hold:}

    \begin{enumerate}
      \item \emph{every expert's Bayesian posterior belief $\mathcal{P}(\vec{\theta}|\vec{X}_j)$ is \textbf{strictly radially decreasing} from $\vec{X}_j $;}
      \item \emph{the conditional likelihood of each expert $j$'s expected posterior prediction $f_{\vec{\theta}}(\vec{{x}}_j) = \mathcal{Q}(\vec{{x}}_j | \vec{\theta})$ is \textbf{radially symmetric} at the ground truth $\vec{\theta}$;}
      \item \emph{the conditional probability densities $\{\mathcal{Q}(\vec{x}|\vec{\theta})\}_{\vec{\theta} \in \reals^m}$ are a \textbf{location family}.}
    \end{enumerate}

    \begin{proof}
      Let ${\bm {X}}_{-j}$ be the true reports of experts other than expert $j$ and expert $j$'s  signal is $\vec{{z}}_j$.  If expert $j$ reports $\vec{w}_j = s(\vec{x}_j)$, their expected utility $\hat{U}_j(\vec{w}_j, \vec{z}_j)$ is:
      \begin{equation}
        \hat{U}_j(\vec{w}_j, \vec{z}_j) =  \mathbb{E}_{{\bm {X}}_{-j}, \vec{\theta}|\vec{x}_j}[U_j(\vec{w}_j, {\bm {X}}_{-j}, \vec{\theta})]= \int \left[\int  \mathcal{M}_j(\vec{w}_j, {\bm {X}}_{-j}, \vec{\theta})  \mathcal{Q}({\bm {X}}_{-j}|\vec{\theta}) ~\dd{\bm {X}}_{-j}\right]  \mathcal{P}(\vec{\theta}|\vec{z}_j) ~\dd \vec{\theta}.
      \end{equation}
      Consider any two reports $\vec{u}$ and $\vec{v}$ where $0 \leq \|\vec{u} - \vec{{x}}_{j} \|^2 < \|\vec{v} - \vec{{x}}_{j} \|^2$, i.e., $\vec{u}$ is closer to the true prediction than $\vec{v}$ is. We will show that $\hat{U}(\vec{u}, \vec{z}_j) > \hat{U}(\vec{v}, \vec{z}_j)$.  This will imply that $\vec{w}_j  = \vec{{x}}_j$ maximizes expected utility. We consider the difference and will show that it is strictly positive:

      \begin{equation}\label{eq:diffexp}
        \begin{split}
          \hat{U}(\vec{u}, \vec{z}_j) - \hat{U}(\vec{v}, \vec{z}_j)  &= \int \left[\int \left[\mathcal{M}_j(\vec{u}, {\bm {X}}_{-j}, \vec{\theta}) - \mathcal{M}_j(\vec{v}, {\bm {X}}_{-j}, \vec{\theta})\right]  \mathcal{Q}({\bm {X}}_{-j}|\vec{\theta})  ~\dd {\bm {X}}_{-j} \right]  \mathcal{P}(\vec{\theta}|\vec{z}_j)  ~\dd \vec{\theta} \\
          &= \int_{\vec{\theta} \in \reals^m} g(\vec{u}, \vec{v}, \vec{\theta}) ~  \mathcal{P}(\vec{\theta}|\vec{z}_j)  ~\dd \vec{\theta}
        \end{split}
      \end{equation}
      where $g(\vec{u}, \vec{v}, \vec{\theta}) = \mathbb{E}_{{\bm {X}}_{-j}|\vec{\theta}} \left[\mathcal{M}_j(\vec{u}, {\bm {X}}_{-j}, \vec{\theta}) - \mathcal{M}_j(\vec{v}, {\bm {X}}_{-j}, \vec{\theta})\right] $. Intuitively, this is the gain in win probability by going from reporting $\vec{v}$ to reporting $\vec{u}$ when the ground truth is $\vec{\theta}$.

      Now, our strategy will be: (1) observe that we can split $\mathbb{R}^m$ into two half-spaces, where $g(\vec{u}, \vec{v}, \vec{\theta})$ is positive in one of them and negative in the other; (2) match each $g(\vec{u}, \vec{v}, \vec{\theta})$ in the first half-space with an equal and opposite-signed $g(\vec{u}, \vec{v}, \vec{\theta})$ in the other half-space; (3) argue that the weight $\mathcal{P}(\vec{\theta}|\vec{z}_j)$ on the positive signed $g(\vec{u}, \vec{v}, \vec{\theta})$ is strictly greater than the weight on the negative signed $g(\vec{u}, \vec{v}, \vec{\theta})$, so the full integral in Equation \ref{eq:diffexp} is strictly positive.

      \paragraph{Step 1: Separating opposite-signed $g(\vec{u}, \vec{v}, \vec{\theta})$ across half-spaces} Consider the half-space of points closer to $\vec{u}$ than it is to $\vec{v}$:
      \begin{equation}
        H = \big\{ \vec{\theta}': \|\vec{\theta}' - \vec{u}\|^2 \leq \|\vec{\theta}' - \vec{v}\|^2 \big\}.
      \end{equation}

      Let the complement be $H^c$ and $\partial H$ be the hyperplane separating the half-spaces. First, observe that when $\vec{\theta} \in H$, i.e., it is closer to $\vec{u}$ than to $\vec{v}$, so by \textbf{monotonicity of the mechanism}, either $\vec{v}$ wins and so does $\vec{u}$, or $\vec{v}$ loses but $\vec{u}$ wins, or both $\vec{v}$ and $\vec{u}$ lose the competition. In other words, $\mathcal{M}_j(\vec{u}, {\bm {X}}_{-j}, \vec{\theta}) - \mathcal{M}_j(\vec{v}, {\bm {X}}_{-j}, \vec{\theta}) \in \{0, 1\}$, and so $g(\vec{u}, \vec{v}, \vec{\theta}) \geq 0$ when $\vec{\theta} \in H$. By a similar argument we have $\mathcal{M}_j(\vec{u}, {\bm {X}}_{-j}, \vec{\theta}) - \mathcal{M}_j(\vec{v}, \vec{\bm w}_{-j}, \vec{\theta}) \in \{-1, 0\}$ and so $g(\vec{u}, \vec{v}, \vec{\theta}) \leq 0$ when $\vec{\theta} \in H^c$.

      We can go further and obtain strict inequalities. First, some definitions: (1) for a given $\vec{a}, \vec{b} \in \mathbb{R}^m$ and expert $j$,  let $d(\vec{a}, \vec{b}) = \|\vec{a} - \vec{b}\|^2$; (2) let $R(\vec{a}, \vec{b}, \vec{s})= \{ \vec{c}: d(\vec{a}, \vec{s}) < d(\vec{c}, \vec{s}) < d(\vec{b}, \vec{s})\}$ be the spherical shell centered at $\vec{\theta}$ of points closer to $\vec{s}$ than $\vec{v}$ but farther than $\vec{u}$.

      Now, let $\vec{{x}}^*_{-j} = \text{argmin}_{ j'\neq j} \|\vec{\theta} - \vec{{x}}_{j'}\|^2$ be the report among other experts that is closest to ground truth $\vec{\theta}$. Then, observe that when $\vec{\theta} \in H$, we have $\mathcal{M}_j(\vec{u}, {\bm {X}}_{-j}, \vec{\theta}) - \mathcal{M}_j(\vec{v}, {\bm {X}}_{-j}, \vec{\theta}) = \mathds{1}\left[{\vec{{x}}^*_{-j} \in R(\vec{u}, \vec{v}, \vec{\theta})}\right]$. In other words, expert $j$ wins by reporting $\vec{u}$ and loses by reporting $\vec{v}$ when peers' best true report $\vec{{x}}_{-j}^*$ lies inside the spherical shell $R(\vec{u}, \vec{v}, \vec{\theta})$, and so is closer than $\vec{v}$ but farther than $\vec{u}$ from the oracle-provided ground truth $\vec{\theta}$. Otherwise, reporting $\vec{u}$ or $\vec{v}$ produces the same competition outcome for expert $j$. Thus,
      \begin{equation}
        g(\vec{u}, \vec{v}, \vec{\theta}) =  \mathbb{E}_{{\bm {X}}_{-j}|\vec{\theta}} \left[\mathds{1}\left[{\vec{{x}}^*_{-j} \in R(\vec{u}, \vec{v}, \vec{\theta})}\right]\right] = \text{Pr}\left(\vec{{x}}^*_{-j} \in R(\vec{u}, \vec{v}, \vec{\theta}) \left|\right. \vec{\theta} \right).
      \end{equation}

      Since $\mathcal{Q}_j(\vec{{x}}_{j}|\vec{\theta}) > 0$ for all $j$, we have that $\text{Pr}\left(\vec{{x}}^*_{-j} \in R(\vec{u}, \vec{v}, \vec{\theta}) \left|\right. \vec{\theta} \right) > 0 \Rightarrow g(\vec{u}, \vec{v}, \vec{\theta}) > 0$ when $\vec{\theta} \in H$. By a symmetric argument, we have $g(\vec{u}, \vec{v}, \vec{\theta}) < 0$ when $\vec{\theta} \in H^c$. Thus,

      \begin{equation}\label{eq:gsign}
        g(\vec{u}, \vec{v}, \vec{\theta}) \text{ is }
        \begin{cases}
          > 0 & \vec{\theta} \in H \\
          < 0 & \vec{\theta} \in H^c.
        \end{cases}
      \end{equation}

      \paragraph{Step 2: Match $g(\vec{u}, \vec{v}, \vec{\theta})$ across half-spaces}

      Let $\vec{\theta}' \in H^c$ be the reflection of $\vec{\theta} \in H$ and ${\bm {X}}'_{-j}$ be the (column-wise) reflection of ${\bm {X}}_{-j}$ across the hyperplane $\partial H$ separating the half-spaces. Observe that by \textbf{radial symmetry and location family properties of the likelihood} at $\vec{\theta}$, we have $\mathcal{Q}({{\bm X}}'_{-j} | \vec{\theta}') = \mathcal{Q}({{\bm X}}_{-j} | \vec{\theta})$. By \textbf{radial symmetry of the mechanism} at the reference solution $\vec{\theta}$, we have $\mathcal{M}_j(\vec{u}, {\bm {X}}_{-j}, \vec{\theta}) = \mathcal{M}_j(\vec{v}, {\bm {X}}'_{-j}, \vec{\theta}')$. Thus, for any $\vec{\theta}' \in H^c$ (with mirror $\vec{\theta} \in H$), we can write:
      \begin{equation}
        \begin{split}
          g(\vec{u}, \vec{v}, \vec{\theta}') &= \int \left[\mathcal{M}_j(\vec{u}, {\bm {X}}'_{-j}, \vec{\theta}') - \mathcal{M}_j(\vec{v}, {\bm {X}}'_{-j}, \vec{\theta}')\right] \mathcal{Q}({\bm {X}}'_{-j}|\vec{\theta}') ~\dd{\bm {X}}'_{-j} \\
          &= \int \left[\mathcal{M}_j(\vec{v}, {\bm {X}}_{-j}, \vec{\theta}) - \mathcal{M}_j(\vec{u}, {\bm {X}}_{-j}, \vec{\theta})\right] \mathcal{Q}({\bm {X}}_{-j}|\vec{\theta}) ~\dd{\bm {X}}_{-j} \\
          &= -g(\vec{u}, \vec{v}, \vec{\theta}).
        \end{split}
      \end{equation}
      In other words, for any $g(\vec{u}, \vec{v}, \vec{\theta})$ with $\vec{\theta} \in H$, we have an equal and opposite $g(\vec{u}, \vec{v}, \vec{\theta}')$ with $\vec{\theta}' \in H^c$.

      \paragraph{Step 3: Positive signed $g(\vec{u}, \vec{v}, \vec{\theta})$ has greater mass}
      We can split the integral in Equation \ref{eq:diffexp} as:
      \begin{equation}\label{eq:diffexp2}
        \hat{U}(\vec{u}, \vec{z}_j) - \hat{U}(\vec{v}, \vec{z}_j) = \int_{\vec{\theta} \in H} g(\vec{u}, \vec{v}, \vec{\theta}) ~  \mathcal{P}(\vec{\theta}|\vec{z}_j)  ~\dd \vec{\theta} + \int_{\vec{\theta}' \in H^c} g(\vec{u}, \vec{v}, \vec{\theta}') ~  \mathcal{P}(\vec{\theta}'|\vec{z}_j)  ~\dd \vec{\theta}'
      \end{equation}
      Let $\vec{q}$ be the midpoint of $\vec{u}, \vec{v}$. We can reparameterize $\vec{\theta} = \vec{q} + \vec{\xi}$, with the $\partial H$-mirrored $\vec{\theta}' = \vec{q} - \vec{\xi}$. Then, collecting the integral in Equation \ref{eq:diffexp2}, we reparameterize and group like terms while using the matching property from the previous step to flips signs:
      \begin{equation}\small
        \begin{split}
          \hat{U}(\vec{u}, \vec{z}_j) - \hat{U}(\vec{v}, \vec{z}_j) &= \int_{\vec{\theta} \in H} g(\vec{u}, \vec{v}, \vec{\theta}) ~  \mathcal{P}(\vec{\theta}|\vec{z}_j)  ~\dd \vec{\theta} + \int_{\vec{\theta}' \in H^c} g(\vec{u}, \vec{v}, \vec{\theta}') ~  \mathcal{P}(\vec{\theta}'|\vec{z}_j)  ~\dd \vec{\theta}'  \\
          &= \int_{\vec{\xi}: \vec{q} + \vec{\xi} \in H} g(\vec{u}, \vec{v}, \vec{q} + \vec{\xi}) ~  \mathcal{P}(\vec{q} + \vec{\xi}|\vec{z}_j)  ~\dd \vec{\xi} + \int_{\vec{\xi}: \vec{q} + \vec{\xi} \in H} g(\vec{u}, \vec{v}, \vec{q} - \vec{\xi}) ~  \mathcal{P}(\vec{q} - \vec{\xi}|\vec{z}_j)  ~\dd \vec{\xi}  \\
          &= \int_{\vec{\xi}: \vec{q} + \vec{\xi} \in H} \left[ g(\vec{u}, \vec{v}, \vec{q} + \vec{\xi}) ~  \mathcal{P}(\vec{q} + \vec{\xi}|\vec{z}_j)    -g(\vec{u}, \vec{v}, \vec{q} + \vec{\xi}) ~  \mathcal{P}(\vec{q} - \vec{\xi}|\vec{z}_j)\right]  ~\dd \vec{\xi}  \\
          &= \int_{\vec{\theta} \in H} g(\vec{u}, \vec{v}, \vec{\theta}) ~  \left[\mathcal{P}(\vec{\theta}|\vec{z}_j) -  \mathcal{P}(\vec{\theta}'|\vec{z}_j) \right] ~\dd \vec{\theta}
        \end{split}
      \end{equation}

      Since the \textbf{posterior is strictly radially decreasing}, and the reflection $\vec{\theta}'$ is farther from $\vec{{x}}_j$ than $\vec{\theta}$ is, we have $\mathcal{P}(\vec{\theta}|\vec{z}_j) -  \mathcal{P}(\vec{\theta}'|\vec{z}_j) > 0$. From Equation \ref{eq:gsign}, we have that $g(\vec{u}, \vec{v}, \vec{\theta}) > 0$ for any $\vec{\theta} \in H$. Thus, the integral above is strictly positive, and $\hat{U}(\vec{u}, \vec{z}_j) - \hat{U}(\vec{v}, \vec{z}_j) > 0$.

      Since this holds for any $\vec{u}, \vec{v}$ where $d(\vec{u}, \vec{{x}}_j) < d(\vec{v}, \vec{{x}}_j)$, we have that expected utility is maximized when expert $j$ reports their true prediction $\vec{w}_j =\vec{{x}}_j$ when all other experts also report truthfully, i.e., truthful reporting is a Bayes-Nash Equilibrium.
    \end{proof}

    \noindent{\textbf{Theorem \ref{thm:eff}}}
    \emph{Suppose Assumptions \ref{as:iid}-\ref{as:support} hold. Let $\{\mathcal{Q}_j(\vec{x}_j|\vec{\theta}\}_{j\in J}$ be \textbf{radially symmetric} at $\vec{\theta}$. Additionally, let $\mathcal{T}_A(\vec{\theta}^s|\vec{\theta}), \mathcal{T}_B(\vec{\theta}^s|\vec{\theta})$ be conditional distributions that satisfy the following:
        \begin{enumerate}
          \item They are \textbf{radially symmetric} at $\vec{\theta}$;
          \item They are \textbf{strictly radially decreasing} from $\vec{\theta}$;
          \item $\mathcal{T}_A$ satisfies a strict \textbf{radial monotone likelihood ratio property} with respect to $\mathcal{T}_B$ at $\vec{\theta}$.
        \end{enumerate}
        Then, the best expert $j^*$ has a strictly higher probability of winning a competition when $\vec{\theta}^s \sim \mathcal{T}_A$ than when $\vec{\theta}^s \sim \mathcal{T}_B$ under every ground truth $\vec{\theta} \in \reals^m$.
      } \\

      Before we prove this theorem, we prove the following useful lemma:

      \begin{lemma}\label{lem:sym}
        Suppose $\{\mathcal{Q}_j(\vec{x}_j|\vec{\theta}\}$ is \textbf{radially symmetric} at $\vec{\theta}$. Let
          \begin{equation}
            f_{j^*}(\vec{\theta}^{s}, \vec{\theta}) = \int \mathcal{M}_j({\vec{x}}_{j^*}, {\bm {X}}_{-j^*}, \vec{\theta}^{s}) ~\mathcal{Q}({\bm {X}}|\vec{\theta}) ~\dd{\bm {X}}
          \end{equation}
          where $\mathcal{Q}({\bf X}|\vec{\theta}) = \prod_{j} \mathcal{Q}_j(\vec{x}_j|\vec{\theta})$. Then $f_{j^*}(\cdot, \vec{\theta})$ is radially symmetric about and strictly radially decreasing from $\vec{\theta}$.
        \end{lemma}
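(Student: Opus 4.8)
The plan is to prove the two assertions separately: first that $f_{j^*}(\cdot,\vec{\theta})$ depends on the reference only through $\|\vec{\theta}^s-\vec{\theta}\|$ (radial symmetry), and then that this radial profile is strictly decreasing. Throughout I read the integrand $\mathcal{M}_j$ as the win indicator $\mathcal{M}_{j^*}$ of expert $j^*$, and I take $j^*$ to be the \emph{best} expert in the RMLRP sense of the efficiency subsection. This best-expert hypothesis is inessential for radial symmetry but is exactly what forces the strict decrease: if all $\mathcal{Q}_j$ were identical, permutation symmetry would make every expert win with probability $1/n$ for \emph{every} reference, so the profile would be flat rather than decreasing.

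For radial symmetry I would fix an arbitrary orthogonal map $O$ and apply the change of variables $\vec{x}_j \mapsto O(\vec{u}_j-\vec{\theta})+\vec{\theta}$ simultaneously to every expert's report in the integral defining $f_{j^*}\big(O(\vec{\theta}^s-\vec{\theta})+\vec{\theta},\ \vec{\theta}\big)$. The Jacobian is $1$; each factor $\mathcal{Q}_j(\vec{x}_j\mid\vec{\theta})$ is invariant because it is radially symmetric at $\vec{\theta}$; and since $\|\vec{x}_j-(O(\vec{\theta}^s-\vec{\theta})+\vec{\theta})\|=\|O(\vec{u}_j-\vec{\theta})-O(\vec{\theta}^s-\vec{\theta})\|=\|\vec{u}_j-\vec{\theta}^s\|$ for every $j$, the rotated reference yields exactly the same collection of squared report-to-reference distances. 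By radial symmetry of $\mathcal{M}_{j^*}$ at its reference (the Observation on MSE-based competition symmetries) the indicator is then unchanged, so the transformed integral equals $f_{j^*}(\vec{\theta}^s,\vec{\theta})$. Since $O$ was arbitrary, $f_{j^*}(\vec{\theta}^s,\vec{\theta})=F(\|\vec{\theta}^s-\vec{\theta}\|)$ for some profile $F$.

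For the strict decrease, radial symmetry lets me place $\vec{\theta}$ at the origin and the reference at $d\,\vec{e}_1$, reducing the claim to: $F(d)$ is strictly decreasing in $d\ge 0$. Using conditional independence of reports given $\vec{\theta}$ (Assumption~\ref{as:cid}) and expanding $\|\vec{x}_j-d\vec{e}_1\|^2=\|\vec{x}_j\|^2-2d\langle\vec{x}_j,\vec{e}_1\rangle+d^2$, the best expert wins iff the scores $S_j(d)=\|\vec{x}_j\|^2-2d\langle\vec{x}_j,\vec{e}_1\rangle$ satisfy $S_{j^*}(d)<S_j(d)$ for all $j\neq j^*$, so $F(d)=\Pr\big(S_{j^*}(d)<S_j(d)\ \forall j\neq j^*\mid\vec{\theta}=0\big)$. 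At $d=0$ this is $\Pr(\|\vec{x}_{j^*}\|<\|\vec{x}_j\|\ \forall j)$, which strictly exceeds $1/n$ precisely because the best expert's report RMLRP-dominates every other's at $\vec{\theta}$, making $\|\vec{x}_{j^*}\|$ stochastically smallest (full support, Assumption~\ref{as:support}, rules out ties). I would then write $F(d)=\int\big[\prod_{j\neq j^*}\bar{G}_j(u;d)\big]\,g_{j^*}(u;d)\,\dd u$, where $g_{j^*}(\cdot;d)$ is the density of $\|\vec{x}_{j^*}-d\vec{e}_1\|^2$ and $\bar{G}_j(\cdot;d)$ the survival function of $\|\vec{x}_j-d\vec{e}_1\|^2$, and argue that the edge supplied by the RMLRP concentration strictly erodes as $d$ grows.

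The main obstacle is exactly this strict monotonicity in $d$. The clean reflection-and-cancellation device from Theorem~\ref{thm:oracle} does not transfer: there the two competing reports were symmetric under a reflection that preserves the integrating measure, whereas here the two candidate references sit at \emph{different} radii from $\vec{\theta}$, so no reflection fixes the report distributions centered at $\vec{\theta}$ and exact cancellation is impossible — a genuine stochastic-dominance input is unavoidable. I expect the cleanest route to be infinitesimal: differentiate $F(d)$ and show $F'(d)<0$ by checking that the boundary contributions $S_{j^*}(d)=S_j(d)$ are weighted so that the constraints that tighten (where $\langle\vec{x}_{j^*}-\vec{x}_j,\vec{e}_1\rangle$ has the unfavorable sign) strictly dominate those that loosen, the asymmetry being furnished by the RMLRP concentration of $\vec{x}_{j^*}$ about the origin; an alternative is a two-reference coupling that conditions on the radii $\{\|\vec{x}_j\|\}$ and exploits the symmetry of the projections $\langle\vec{x}_j,\vec{e}_1\rangle$ against the best expert's tighter radial law. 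Verifying that the erosion is strict for \emph{every} $d$ rather than only at the endpoints, and that differentiation under the integral is licensed by the full-support and smoothness assumptions, is the delicate step.
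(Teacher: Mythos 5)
Your radial-symmetry argument is complete and correct: the simultaneous orthogonal change of variables, using rotation-invariance of each $\mathcal{Q}_j(\cdot|\vec{\theta})$ and radial symmetry of the MSE mechanism at its reference, is a clean and slightly more direct route to the same conclusion the paper reaches by factoring the win probability into per-peer terms $h_j(\vec{x}_{j^*},\vec{\theta}^s,\vec{\theta})$ and noting each is radially symmetric in $\vec{\theta}^s$. You are also right --- and the lemma statement leaves this implicit --- that the strict decrease genuinely requires the best-expert (RMLRP) hypothesis; your flat-profile observation for identical experts is a good sanity check and matches the paper's invocation of RMLRP inside the proof.

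However, the second half of your proposal has a genuine gap, which you flag yourself: you never actually prove $F'(d)<0$. Saying that the constraints that tighten ``strictly dominate'' those that loosen, with the asymmetry ``furnished by the RMLRP concentration,'' restates the goal rather than proving it --- RMLRP compares likelihood ratios at different radii from $\vec{\theta}$, and nothing in your sketch converts that into a signed comparison of boundary terms. The paper closes exactly this gap with a concrete three-part device. First, differentiating the win probability along $\vec{\theta}^s(\xi)=\vec{\theta}+\xi\vec{e}_1$ produces, for each peer $j$, a double surface integral over pairs $(\vec{x},\vec{y})$ lying on the \emph{same} sphere of radius $r$ centered at the reference $\vec{\theta}^s$ (the peer's report $\vec{x}$ and the best expert's report $\vec{y}$), with weight proportional to $\mathcal{Q}_j(\vec{x}|\vec{\theta})\,\mathcal{Q}_{j^*}(\vec{y}|\vec{\theta})\,(x_1-y_1)/r$; this is where the ``boundary contributions'' actually live, since ties in distance to the reference are what flip the winner. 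Second, antisymmetrize: averaging the integral with its $\vec{x}\leftrightarrow\vec{y}$ swap turns the integrand into $\bigl(\mathcal{Q}_j(\vec{x}|\vec{\theta})\mathcal{Q}_{j^*}(\vec{y}|\vec{\theta})-\mathcal{Q}_j(\vec{y}|\vec{\theta})\mathcal{Q}_{j^*}(\vec{x}|\vec{\theta})\bigr)(x_1-y_1)$. Third, the identity $\|\vec{x}-\vec{\theta}\|^2-\|\vec{y}-\vec{\theta}\|^2=2\xi(x_1-y_1)$ for points on a common sphere around $\vec{\theta}^s$ shows that ordering by distance from $\vec{\theta}$ coincides with ordering by first coordinate, so RMLRP forces both factors of the antisymmetrized integrand to share the same sign pointwise; hence the derivative is strictly negative for every $\xi>0$, not merely near endpoints. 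Note also that your proposed alternative --- conditioning on the radii $\{\|\vec{x}_j\|\}$ around $\vec{\theta}$ --- decomposes on the wrong spheres: the tie events that drive the derivative are spheres centered at $\vec{\theta}^s$, and without placing both reports on a common sphere around the reference, the cross terms you need to compare are not aligned for the RMLRP to bite. Without this (or an equivalent) argument, the strict radial decrease is not established.
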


        \begin{proof}[Proof of Lemma \ref{lem:sym}]
          This is essentially the claim that for a given $\vec{\theta}$ and reference solution $\vec{\theta}^s$, expert $j^*$'s ex-ante win probability formed by averaging over their own predictions and peer experts' predictions is peaked at $\vec{\theta}$ and symmetrically decreases radially outward from $\vec{\theta}$ (when all experts are truthful). To begin, we can write:
          \begin{equation}\label{eq:lem1}
            \begin{split}
              f_{j^*}(\vec{\theta}^{s}, \vec{\theta}) &= \int \mathcal{M}_{j^*}(\vec{{x}}_{j^*}, {\bm {X}}_{-j^*}, \vec{\theta}^{s}) \mathcal{Q}({\bm {X}}|\vec{\theta}) ~\dd{\bm {X}} \\
              &= \int \left[\int \mathcal{M}_{j^*}(\vec{{x}}_{j^*}, {\bm {X}}_{-j^*}, \vec{\theta}^{s}) \mathcal{Q}({\bm {X}}_{-j^*}|\vec{\theta}) ~\dd {\bm X}_{-j^*}\right] \mathcal{Q}_{j^*}(\vec{x}_{j^*}|\vec{\theta})~\dd\vec{{x}}_{j^*} \\
              &= \int h(\vec{x}_{j^*}, \vec{\theta}^s, \vec{\theta}) ~\mathcal{Q}_{j^*}({ \vec{x}}_{j^*}|\vec{\theta}) \dd\vec{{x}}_{j^*}
            \end{split}
          \end{equation}
          where $h(\vec{x}_{j^*}, \vec{\theta}^s, \vec{\theta}) = \int \mathcal{M}_{j^*}(\vec{{x}}_{j^*}, {\bm {X}}_{-j^*}, \vec{\theta}^{s}) ~\mathcal{Q}({\bm {X}}_{-j^*}|\vec{\theta}) ~\dd {\bm X}_{-j^*}$ is the probability that expert $j^*$ wins fixing their (truthful) report $\vec{x}_j$, and $\vec{\theta}^s, \vec{\theta}$.

          \paragraph{Step 1: Analyze $h(\vec{x}_{j^*}, \vec{\theta}^s, \vec{\theta})$, expert $j^*$'s probability of winning} Writing $r = \|\vec{x}_{j^*} - \vec{\theta}^s\|$ as distance between the reference solution and expert $j^*$'s report, observe:
          \begin{equation}
            \begin{split}
              h(\vec{x}_{j^*}, \vec{\theta}^s, \vec{\theta}) &=\int \mathcal{M}_{j^*}(\vec{{x}}_{j^*}, {\bm {X}}_{-j^*}, \vec{\theta}^{s}) ~\mathcal{Q}({\bm {X}}_{-j^*}|\vec{\theta}) ~\dd {\bm X}_{-j^*} \\
              &= \prod_{j \neq j^*} \left(1 - \int_{\|\vec{x} - \vec{\theta}^s \| \leq r} \mathcal{Q}_j(\vec{x} | \vec{\theta}) ~\dd \vec{x}\right) \\
              &= \prod_{j \neq j^*} h_{j}(\vec{x}_{j^*}, \vec{\theta}^s, \vec{\theta})
            \end{split}
          \end{equation}
          where $ h_{j}(\vec{x}_{j^*}, \vec{\theta}^s, \vec{\theta}) = 1 - \int_{\|\vec{x} - \vec{\theta}^s \| \leq r} \mathcal{Q}_j(\vec{x} | \vec{\theta}) ~\dd \vec{x}$ 
          is the probability that expert $j^*$ is closer to $\vec{\theta}^s$ than a given expert $j$.

          \textbf{Observe} that $h_{j}(\vec{x}_{j^*}, \vec{\theta}^s, \vec{\theta})$ is radially symmetric in $\vec{\theta}^s$ about $\vec{\theta}$ \textbf{due to the radial symmetry of} $\mathcal{Q}_j(\vec{x}|\vec{\theta})$. This gives us that $h(\vec{x}_{j^*}, \vec{\theta}^s, \vec{\theta})$ is radially symmetric in $\vec{\theta}^s$ about $\vec{\theta}$, which \textbf{in conjunction with the radial symmetry of} $\mathcal{Q}_{j^*}$ in Equation \ref{eq:lem1}, \textbf{gives us radial symmetry of $f_{j^*}(\vec{\theta}^{s}, \vec{\theta})$.} This establishes the first claim.

          Next, observe that for any $\vec{x}, \vec{x}'$ that are equidistant from $\vec{\theta}^s$, we have $h_j(\vec{x}, \vec{\theta}^s, \vec{\theta}) = h_j(\vec{x}', \vec{\theta}^s, \vec{\theta})$ and $h(\vec{x}, \vec{\theta}^s, \vec{\theta}) = h(\vec{x}', \vec{\theta}^s, \vec{\theta})$, since the balls $R_{\vec{\theta}^s}(\vec{x}) = R_{\vec{\theta}^s}(\vec{x}')$, with both $\vec{x}, \vec{x}'$ on the ball's surface. In other words, conditional on $\vec{\theta}^s, \vec{\theta}$, the win probability is constant for any report with a constant distance from $\vec{\theta}^s$. Thus, with mild abuse of notation we can write this as $h_j(r, \vec{\theta}^s, \vec{\theta})$ when convenient.

          \paragraph{Step 2: Analyze $\partial h(\vec{x}_{j^*}, \vec{\theta}^s(\xi), \vec{\theta}) / \partial {\xi}$, change in expert $j^*$'s probability of winning when $\vec{\theta}^s$ moves outward from $\vec{\theta}$ radially in direction $\xi$} Now, without loss of generality, consider $\vec{\theta}^s$ which is axis aligned on the first coordinate as $\vec{\theta}^s(\xi) = \vec{\theta} + \vec{\xi} = (\theta_1 + \xi, \theta_2 \ldots, \theta_m)$ where $\vec{\xi} = (\xi, 0, \ldots, 0)$ and $\xi > 0$, and (with further mild abuse of notation) consider how $h(\vec{x}_{j^*}, \vec{\theta}^s(\xi), \vec{\theta})$ changes as we move 
          $\vec{\theta}^s(\xi)$ away from $\vec{\theta}$
          in the direction of $\vec \xi$:
          \begin{equation}\label{eq:derivh}
            \begin{split}
              \frac{\partial}{\partial \xi} h(\vec{x}_{j^*}, \vec{\theta}^s(\xi), \vec{\theta}) &= \sum_{j \neq j^*} \left(\frac{\partial }{\partial \xi} h_{j}(\vec{x}_{j^*}, \vec{\theta}^s(\xi), \vec{\theta}) \right)\prod_{j' \neq j^*, j} h_{j'}(\vec{x}_{j^*}, \vec{\theta}^s, \vec{\theta}) \\
              &= \sum_{j \neq j^*} \left(\frac{\partial }{\partial \xi} h_j(\vec{x}_{j^*}, \vec{\theta}^s(\xi), \vec{\theta}) \right) \frac{h(\vec{x}_{j^*}, \vec{\theta}^s, \vec{\theta})}{h_{j}(\vec{x}_{j^*}, \vec{\theta}^s, \vec{\theta})} \\
              &= \sum_{j \neq j^*} \left(- \frac{\partial}{\partial \xi}\int_{\|\vec{x} - \vec{\theta} - \vec{\xi} \| \leq \|\vec{x}_{j^*} - \vec{\theta} - \vec{\xi}\|} \mathcal{Q}_j(\vec{x} |\vec{\theta}) ~\dd \vec{x} \right) \frac{h(\vec{x}_{j^*}, \vec{\theta}^s, \vec{\theta})}{h_{j}(\vec{x}_{j^*}, \vec{\theta}^s, \vec{\theta})}\\
              &= \sum_{j \neq j^*}  \left(- \int_{\|\vec{x} - \vec{\theta}^s\| = r} \mathcal{Q}_j(\vec{x} |\vec{\theta}) \left( \frac{x_1 - (x_{j^*})_1}{r} \right) ~\dd S(\vec{x}) \right) \frac{h(r, \vec{\theta}^s, \vec{\theta})}{h_{j}(r, \vec{\theta}^s, \vec{\theta})}
            \end{split}
          \end{equation}
          where the final step includes an integral of peer $j$'s reports over the surface of the sphere at radius $r$ from $\vec{\theta}^s(\xi)$. This tells us how $h(\vec{x}_{j^*}, \vec{\theta}^s, \vec{\theta})$ changes in any arbitrary outward radial direction, and we can use this to compute how the original expression $f_{j^*}(\vec{\theta}^s, \vec{\theta})$ varies when $\theta^s$ moves outward radially.

          \paragraph{Step 3: Revisit $f_{j^*}(\vec{\theta}^{s}, \vec{\theta})$ and take derivative in radial direction} Let us take the partial derivative with respect to $\xi$, and rewrite the integral over $\vec{x}_{j^*}$ into a double-integral over the points on the sphere $\vec{y}$ at a fixed radius $r$ from $\vec{\theta}^s$ and over the radius $r$. 
          \begin{equation}
            \begin{split}
              \frac{\partial}{\partial \xi} f_{j^*}(\vec{\theta}^{s}(\xi), \vec{\theta}) &= \int \frac{\partial}{\partial \xi} h(\vec{x}_{j^*}, \xi, \vec{\theta}) ~\mathcal{Q}_{j^*}({ \vec{x}}_{j^*}|\vec{\theta}) \dd\vec{{x}}_{j^*} \\
              &= \int_{r=0}^{\infty} \int_{\|\vec{y} - \vec{\theta}^s\| = r} \frac{\partial}{\partial \xi}  h(\vec{y}, \xi, \vec{\theta}) ~\mathcal{Q}_{j^*}({ \vec{y}}|\vec{\theta}) ~\dd S(\vec{{y}}) ~\dd r.
            \end{split}
          \end{equation}
          Note that the inner integral over $\vec{y}$ is over expert $j^*$'s reports over surface of the ball centered at $\vec{\theta}^s$ with radius $r$, while the integral over $\vec{x}$ in Equation \ref{eq:derivh} is over some peer $j$'s reports \emph{on the surface of the same ball}. We will leverage this to simplify the computation of the \emph{change} in $f_{j^*}(\vec{\theta}^s, \vec{\theta})$ as we move $\vec{\theta}^s$ outward radially in the direction of $\xi$:
          \begin{equation*}\small
            \begin{split}
              & \frac{\partial}{\partial \xi} f_{j^*}(\vec{\theta}^s(\xi), \vec{\theta})
              = \int_{0}^{\infty} \int_{\|\vec{y} - \vec{\theta}^s\| = r} \frac{\partial}{\partial \xi}  h(\vec{y}, \vec{\theta}^s(\xi), \vec{\theta}) ~\mathcal{Q}_{j^*}({ \vec{y}} | \vec{\theta}) ~\dd S(\vec{{y}}) ~\dd r \\
              &= \int_{0}^{\infty} \int_{\|\vec{y} - \vec{\theta}^s\| = r}  \left[\sum_{j \neq j^*} \frac{h(r, \vec{\theta}^s, \vec{\theta})}{h_{j}(r, \vec{\theta}^s, \vec{\theta})}\left(- \int_{\|\vec{x} - \vec{\theta}^s\| = r} \mathcal{Q}_j(\vec{x} |\vec{\theta}) \left( \frac{x_1 - y_1}{r} \right) ~\dd S(\vec{x}) \right) \right] ~\mathcal{Q}_{j^*}({ \vec{y}}|\vec{\theta}) ~\dd S(\vec{{y}}) ~\dd r \\
              &= \int_{0}^{\infty}   \sum_{j \neq j^*} \frac{h(r, \vec{\theta}^s,\vec{\theta})}{h_{j}(r, \vec{\theta}^s,\vec{\theta})}\left[ \int_{\|\vec{y} - \vec{\theta}^s\| = r} \left(- \int_{\|\vec{x} - \vec{\theta}^s\| = r} \mathcal{Q}_j(\vec{x} |\vec{\theta})  \left( \frac{x_1 - y_1}{r} \right) ~\dd S(\vec{x})  \right) ~\mathcal{Q}_{j^*}({ \vec{y}}|\vec{\theta})  ~\dd S(\vec{{y}}) \right]  ~\dd r \\
              &= \int_{0}^{\infty}   \sum_{j \neq j^*} \frac{h(r, \vec{\theta}^s\vec{\theta})}{h_{j}(r, \vec{\theta}^s,\vec{\theta})}\left(- \int_{\|\vec{y} - \vec{\theta}^s\| = r}\int_{\|\vec{x} - \vec{\theta}^s\| = r} \mathcal{Q}_j(\vec{x} |\vec{\theta}) ~\mathcal{Q}_{j^*}({ \vec{y}}|\vec{\theta}) \left( \frac{x_1 - y_1}{r} \right) ~\dd S(\vec{x})  ~\dd S(\vec{{y}}) \right)   ~\dd r \\
              &= \int_{0}^{\infty}   \sum_{j \neq j^*} \frac{h(r, \vec{\theta}^s,\vec{\theta})}{h_{j}(r, \vec{\theta}^s,\vec{\theta})}\left(- H(\vec{\theta}^s, \vec{\theta}, r) \right)   ~\dd r
            \end{split}
          \end{equation*}
          where $H(\vec{\theta}^s, \vec{\theta}, r) = \int_{\|\vec{y} - \vec{\theta}^s\| = r}\int_{\|\vec{x} - \vec{\theta}^s\| = r} \mathcal{Q}_j(\vec{x} | \vec{\theta}) ~\mathcal{Q}_{j^*}({ \vec{y}}|\vec{\theta}) \left( \frac{x_1 - y_1}{r} \right) ~\dd S(\vec{x})  ~\dd S(\vec{{y}})$. This double integral is over all potential expert $j^*$'s reports at some distance $r$ away from $\vec{\theta}^s$, and over a peer $j$'s reports the same distance away.

          \paragraph{Step 4: Establish that $H(\vec{\theta}^s, \vec{\theta}) > 0$} Now, adding $H(\vec{\theta}^s, \vec{\theta})$ to itself 
          and swapping the variable names $\vec x$ and $\vec y$,
          we obtain:
          \begin{equation*}\small
            \begin{split}
              H(\vec{\theta}^s) &= \frac12 \left[\int_{\|\vec{y} - \vec{\theta}^s\| = r}\int_{\|\vec{x} - \vec{\theta}^s\| = r} \mathcal{Q}_j(\vec{x} |\vec{\theta}^s) ~\mathcal{Q}_{j^*}({ \vec{y}}|\vec{\theta}) \left( \frac{x_1 - y_1}{r} \right) ~\dd S(\vec{x})  ~\dd S(\vec{{y}})  \right. \\
              &\phantom{====} + ~~\left. \int_{\|\vec{x} - \vec{\theta}^s\| = r}\int_{\|\vec{y} - \vec{\theta}^s\| = r} \mathcal{Q}_j(\vec{y} |\vec{\theta}) ~\mathcal{Q}_{j^*}({ \vec{x}}|\vec{\theta}) \left( \frac{y_1 - x_1}{r} \right) ~\dd S(\vec{y})  ~\dd S(\vec{{x}})\right] \\
              &= \frac1{2r} \left[\int_{\|\vec{y} - \vec{\theta}^s\| = r}\int_{\|\vec{x} - \vec{\theta}^s\| = r} \left(\mathcal{Q}_j(\vec{x} |\vec{\theta}) ~\mathcal{Q}_{j^*}({ \vec{y}}|\vec{\theta}) - \mathcal{Q}_j(\vec{y} |\vec{\theta}) ~\mathcal{Q}_{j^*}({ \vec{x}}|\vec{\theta}) \right) \left( x_1 - y_1 \right) ~\dd S(\vec{x})  ~\dd S(\vec{{y}})  \right].
            \end{split}
          \end{equation*}
          Now, suppose $\|\vec{y} - \vec{\theta}\| < \|\vec{x} - \vec{\theta}\|$. By the \textbf{radial MLRP of $\mathcal{Q}_j^*$ with respect to $\mathcal{Q}_j$}, we have $ \frac{\mathcal{Q}_{j^*}(\vec{y}|\vec{\theta})}{\mathcal{Q}_{j}(\vec{y}|\vec{\theta})} > \frac{\mathcal{Q}_{j^*}(\vec{x}|\vec{\theta})}{\mathcal{Q}_{j}(\vec{x}|\vec{\theta})} \Rightarrow \mathcal{Q}_j(\vec{x} |\vec{\theta}) ~\mathcal{Q}_{j^*}({ \vec{y}}|\vec{\theta}) - \mathcal{Q}_j(\vec{y} |\vec{\theta}) ~\mathcal{Q}_{j^*}({ \vec{x}}|\vec{\theta}) > 0$. Additionally,
          \begin{equation*}
            \|\vec{y} - \vec{\theta}\|^2 = 2 (\vec{\theta}^s-\vec{\theta}) \cdot (\vec{y} - \vec{\theta}) - (\|(\vec{\theta}^s-\vec{\theta}) - (\vec{y} - \vec{\theta})\|^2 - \|\vec{\theta}^s - \vec{\theta}\|^2) = 2 \xi (y_1 - \theta_1) - (r^2 - \|\vec{\theta}^s - \vec{\theta}\|^2)
          \end{equation*}
          Similarly, $\|\vec{x} - \vec{\theta}\|^2 =  2 \xi (x_1-\theta_1) - (r^2 - \|\vec{\theta}^s - \vec{\theta}\|^2)$. Thus,
          \begin{align*}
            \|\vec{y}-\vec{\theta}\|^2 < \|\vec{x}-\vec{\theta}\|^2 & \Rightarrow  2 \xi (y_1-\theta_1) - (r^2 - \|\vec{\theta}^s - \vec{\theta}\|^2) < 2 \xi (x_1 -\theta_1) - (r^2 - \|\vec{\theta}^s - \vec{\theta}\|^2) \\
            & \Rightarrow x_1 - y_1 > 0.
          \end{align*}
          Thus, we have that the integrand $\left(\mathcal{Q}_j(\vec{x} |\vec{\theta}) ~\mathcal{Q}_{j^*}({ \vec{y}}|\vec{\theta}) - \mathcal{Q}_j(\vec{y} |\vec{\theta}) ~\mathcal{Q}_{j^*}({ \vec{x}}|\vec{\theta}) \right) \left( x_1 - y_1 \right) > 0$, which gives us that $H(\vec{\theta}^s,\vec{\theta}) > 0$.

          Similarly, if $\|\vec{x}-\vec{\theta}\| < \|\vec{y}-\vec{\theta}\|$, \textbf{by radial MLRP of $\mathcal{Q}_j^*$ with respect to $\mathcal{Q}_j$} we have $\mathcal{Q}_j(\vec{x} |\vec{\theta}) ~\mathcal{Q}_{j^*}({ \vec{y}}|\vec{\theta}) - \mathcal{Q}_j(\vec{y}|\vec{\theta}) ~\mathcal{Q}_{j^*}({ \vec{x}}|\vec{\theta}) < 0$. But since this implies $x_1 < y_1$, the integrand is once again positive. Thus, we have established that $H(\vec{\theta}^s, \vec{\theta}) > 0$. This finally gives us that:

          \begin{equation}
            \frac{\partial}{\partial \xi } f_{j^*}(\vec{\theta}^s(\xi), \vec{\theta}) = \int_{0}^{\infty}   \sum_{j \neq j^*} \frac{h(r, \vec{\theta}^s,\vec{\theta})}{h_{j}(r, \vec{\theta}^s,\vec{\theta})}\left(- H(\vec{\theta}^s,\vec{\theta}) \right)   ~\dd r < 0.
          \end{equation}

          Thus, we have established that $f_{j^*}(\vec{\theta}^s, \vec{\theta})$ is strictly decreasing radially outward in $\vec{\theta}^s$.
        \end{proof}

        \begin{proof}[Proof of Theorem \ref{thm:eff}]
          Now, we can prove Theorem \ref{thm:eff}. Conditioning on ground truth $\vec{\theta} \in \reals^m$, when the estimator $\vec{\theta}^{s}$ is drawn from $\mathcal{T}_A$, expert $j^*$'s expected utility is:
          \begin{equation}
            \begin{split}
              \mathbb{E}_{\bm X, \vec{\theta}^s | \vec{\theta}}[U_{j^*} ] &= \int \left[\int \mathcal{M}_{j^*}({{x}}_{j^*}, {\bm {X}}_{-j^*}, \vec{\theta}^{s}) ~ \mathcal{Q}({\bm {X}}|\vec{\theta}) ~ \dd{\bm {X}} \right] ~\mathcal{T}_A(\vec{\theta}^{s}|\vec{\theta}) ~ \dd\vec{\theta}^{s} \\ 
              &= \int  f_{j^*}(\vec{\theta}^{s}, \vec{\theta})  ~\mathcal{T}_A(\vec{\theta}^{s}|\vec{\theta}) ~ \dd\vec{\theta}^{s}, 
            \end{split}
          \end{equation}
          where we define $f_{j^*}(\vec{\theta}^{s}, \vec{\theta}) = \int \mathcal{M}_j({{x}}_{j^*}, {\bm {X}}_{-j^*}, \vec{\theta}^{s}) \mathcal{Q}({\bm {X}}|\vec{\theta}) \dd{\bm {X}}$.
          \textbf{By Lemma \ref{lem:sym}}, $f_{j^*}(\cdot, \vec{\theta})$ is a symmetric unimodal strictly decreasing function centered at $\vec{\theta}$ since expert $j^*$ is the best expert. Then, the difference in expected utilities (i.e., win probabilities) when the reference solution is drawn from $\mathcal{T}_A$ versus when it is drawn from $\mathcal{T}_B$ is:
          \begin{equation}
            \mathbb{E}_{{\bm {X}}, \vec{\theta}^{s} \sim \mathcal{T}_A|\vec{\theta}}[U_{j^*}] - \mathbb{E}_{{\bm {X}}, \vec{\theta}^{s} \sim \mathcal{T}_B|\vec{\theta}}[U_{j^*}] = \int  f_{j^*}(\vec{\theta}^{s}, \vec{\theta})  \left[\mathcal{T}_A(\vec{\theta}^{s}|\vec{\theta}) - \mathcal{T}_B(\vec{\theta}^{s}|\vec{\theta})\right] ~\dd\vec{\theta}^{s}. 
          \end{equation}
          Let $T>0$ be the radius from $\vec{\theta}$ at which $\mathcal{T}_A$ and $\mathcal{T}_B$ intersect. \textbf{Since $\mathcal{T}_A$ satisfies a radial MLRP condition with respect to $\mathcal{T}_B$} and they are both \textbf{strictly radially decreasing} from $\vec{\theta}$, there is exactly one $T>0$ where the two distributions intersect. Writing $D(\vec{\theta}^{s}, \vec{\theta}) = \mathcal{T}_A(\vec{\theta}^{s}|\vec{\theta}) - \mathcal{T}_B(\vec{\theta}^{s}|\vec{\theta})$, unit vector as $\vec{e}$, and using the radial symmetry of $D(\cdot, \vec{\theta})$ (inherited from \textbf{the radial symmetry of $\mathcal{T}_A, \mathcal{T}_B$}):
          \begin{align}
            & \mathbb{E}_{{\bm {X}}, \vec{\theta}^{s} \sim \mathcal{T}_A|\vec{\theta}}[U_{j^*}] - \mathbb{E}_{{\bm {X}}, \vec{\theta}^{s} \sim \mathcal{T}_B|\vec{\theta}}[U_{j^*}] \nonumber \\
            &= \int_{\vec{\theta}^s \in \reals^m}   ~f_{j^*}(\vec{\theta}^s, \vec{\theta}) ~D(\vec{\theta}^s, \vec{\theta}) ~\dd \vec{\theta}^s  \nonumber \\
            &= \int_{r>0} \int_{\vec{v}: \|\vec{v}\| = 1}   ~f_{j^*}(\vec{\theta} + r\vec{v}, \vec{\theta}) ~D(\vec{\theta} + r\vec{v}, \vec{\theta}) ~\dd \vec{v} ~\dd r \nonumber \\
            &= \int_{r>0} \int_{\vec{v}: \|\vec{v}\| = 1}   ~f_{j^*}(\vec{\theta} + r\vec{e}, \vec{\theta}) ~D(\vec{\theta} + r\vec{e}, \vec{\theta}) ~\dd \vec{v} ~\dd r  \nonumber \\
            &=C \int_{r>0} f_{j^*}(\vec{\theta} + r\vec{e}, \vec{\theta}) ~D(\vec{\theta} + r\vec{e}, \vec{\theta}) ~\dd r  \nonumber \\
            &= C \int_{0 < r \leq T} f_{j^*}(\vec{\theta} + r\vec{e}, \vec{\theta})  ~D(\vec{\theta} + r\vec{e}, \vec{\theta}) ~\dd \vec{\theta}^s  ~ + ~ C \int_{r > T} f_{j^*}(\vec{\theta} + r\vec{e}, \vec{\theta})  ~D(\vec{\theta} + r\vec{e}, \vec{\theta}) ~\dd \vec{\theta}^s  \nonumber \\
            &=  C \int_{0 < r \leq T} \left(f_{j^*}(\vec{\theta} + r\vec{e}, \vec{\theta})  - f_{j^*}(\vec{\theta} + T\vec{e}, \vec{\theta})\right)  ~D(\vec{\theta} + r\vec{e}, \vec{\theta}) ~\dd \vec{\theta}^s  \label{eq:long-1}  \\
            &\phantom{=} ~~ + C \int_{r > T} \left(f_{j^*}(\vec{\theta}^s + r\vec{e}, \vec{\theta})   - f_{j^*}(\vec{\theta} + T\vec{e}, \vec{\theta}) \right)  ~D(\vec{\theta} + r\vec{e}, \vec{\theta}) ~\dd \vec{\theta}^s  \label{eq:long-2}  \\
            &\phantom{=} ~~ + C \int_{r > 0} f_{j^*}(\vec{\theta} + T\vec{e}, \vec{\theta})  ~D(\vec{\theta} + r\vec{e}, \vec{\theta}) ~\dd r    \label{eq:long-3}
          \end{align}
          where $C = \int_{\vec{v}: \|\vec{v}\|=1} ~\dd\vec{v}>0$ is the constant from integrating out the angle. Since $\mathcal{T}_A, \mathcal{T}_B$ are radially symmetric with mode at $\vec{\theta}$:
          \begin{equation*}
            \eqref{eq:long-3} = C~f_{j^*}(\vec{\theta} + T\vec{e}, \vec{\theta}) \int_{r  > 0}  D(\vec{\theta} + r\vec{e}, \vec{\theta}) ~\dd r  = 0.
          \end{equation*}

          Additionally, when $0 \leq r \leq T$, $f_{j^*}(\vec{\theta} + r\vec{e}, \vec{\theta})  - f_{j^*}(\vec{\theta} + T\vec{e}, \vec{\theta}) > 0$ (since by Lemma \ref{lem:sym} is \textbf{strictly radially decreasing}) and $D(\vec{\theta} + r\vec{e}, \vec{\theta})> 0$ (\textbf{by radial MLRP}),
          so Equation \eqref{eq:long-1} is positive.
          Similarly,  when $r > T$, $f_{j^*}(\vec{\theta} + r\vec{e}, \vec{\theta})  - f_{j^*}(\vec{\theta} + T\vec{e}, \vec{\theta})  < 0$ and $D(\theta^s, \vec{\theta}) < 0$,
          so Equation \eqref{eq:long-2} is positive.
          Consequently, we have:
          \begin{equation*}
            \mathbb{E}_{{\bm {X}}, \vec{\theta}^{s} \sim \mathcal{T}_A|\vec{\theta}}[U_{j^*}] - \mathbb{E}_{{\bm {X}}, \vec{\theta}^{s} \sim \mathcal{T}_B|\vec{\theta}}[U_{j^*}] > 0.
          \end{equation*}
          Thus, scoring against a more precise estimate (in a radial MLRP sense) strictly increases the probability of identifying the best expert $j^*$.
        \end{proof}

        \section{Experiments}

        We run all experiments on a Macbook Pro M4 with 64GB RAM. Running WOMAC once for a fixed $k$ on the GPU takes 0.2s on the ACX dataset, and takes 0.3s on the HFC dataset. We also provide additional results from our experiments:

        \begin{enumerate}
          \item \textbf{ACX data}:
            \begin{enumerate}
              \item We plot the average difference in Pearson and Spearman correlations across $d$ sub-samples for each $(m_{\text{train}}, m_{\text{test}})$ in Figure \ref{fig:oos-acx-errs}. This establishes that WOMAC scores are indeed significantly better than MSE scores at predicting experts out-of-sample performance given the same amount of data. Intuitively, it makes sense that the gap shrinks; with enough tasks, the MSE score should approach the same predictiveness as WOMAC scores. However, attention is scarce and every additional question imposes a cost on the participant. Thus, competition designers would prefer to efficiently identify the top experts with as few tasks as possible; the fact that the gap is substantial even with between 5-40 tasks shows that WOMAC is significantly more efficient within a practical range of questions.
              \item We plot the Pearson and Spearman correlations for various choices of the hyperparameter $k$ in Figure \ref{fig:oos-acx-cutoffs}. In other words, we vary $k$ and plot the Pearson and Spearman correlations when WOMAC scores are computed as the MSE against the top-$k$\% experts (scored against in-sample outcomes). We find that the choice of cutoff does not make a substantial difference. This is reasonable; the average of the top 5\% shouldn't be \emph{that} different from the average of the all experts since the errors in the latter should be uncorrelated and cancel out.
              \item We plot the Pearson and Spearman correlations using various sub-sampled number of experts, to study the effect of the number of competitors, in Figure \ref{fig:oos-acx-n}. Once again, we find no substantial difference. Once again, this stems from the fact that the reference solution, computed as the average of the top-$k$\% of experts, becomes relatively stable quickly.
            \end{enumerate}
          \item \textbf{HFC Data}:
            \begin{enumerate}
              \item We plot Pearson and Spearman correlations of in-sample WOMAC scores and MSE scores with out-of-sample MSE scores in Figure \ref{fig:oos-hfc-main} (analogous to Figure \ref{fig:oos-acx}). As with the ACX data, we observe that in-sample WOMAC score is more predictive of out-of-sample MSE score compared to in-sample MSE score.
              \item We also plot the average difference in Pearson and Spearman correlations across $d$ sub-samples for each $(m_{\text{train}}, m_{\text{test}})$ in Figure \ref{fig:oos-hfc-errs} (analogous to Figure \ref{fig:oos-acx-errs}). We again observe that WOMAC scores are significantly better than MSE scores at predicting out-of-sample MSE scores, though the gain is smaller.
              \item We plot the median, IQR, and 5th-95th percentile range of optimal cutoff hyperparameters across simulations for each training set size in Figure \ref{fig:oos-hfc-cutoffs}. We observe that roughly taking the average of the top-5\% minimizes the MSE with respect to realized outcomes.
            \end{enumerate}
        \end{enumerate}

        \begin{figure*}[h]
          \centering
          \includegraphics[scale=0.65]{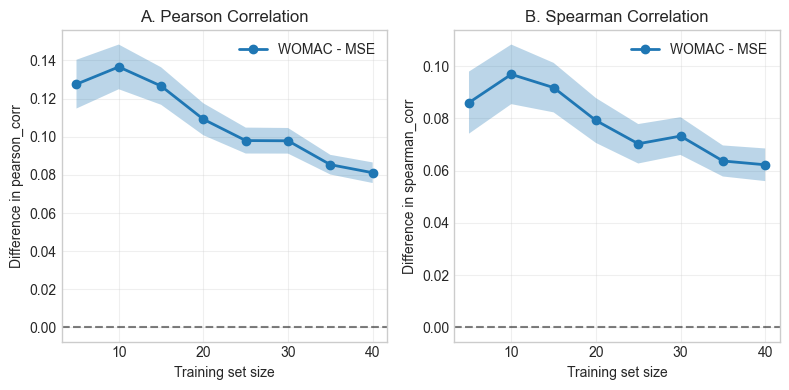}
          \caption{\textbf{[ACX Data] Plot of gap between in-sample WOMAC score's \emph{correlation} with out-of-sample MSE score, and in-sample MSE score's \emph{correlation} with out-of-sample MSE score.} The figure shows that in-sample WOMAC scores are significantly more predictive of out-of-sample MSE scores, compared to in-sample MSE scores.}
          \label{fig:oos-acx-errs}
        \end{figure*}

        \begin{figure*}[h]
          \centering
          \includegraphics[scale=0.65]{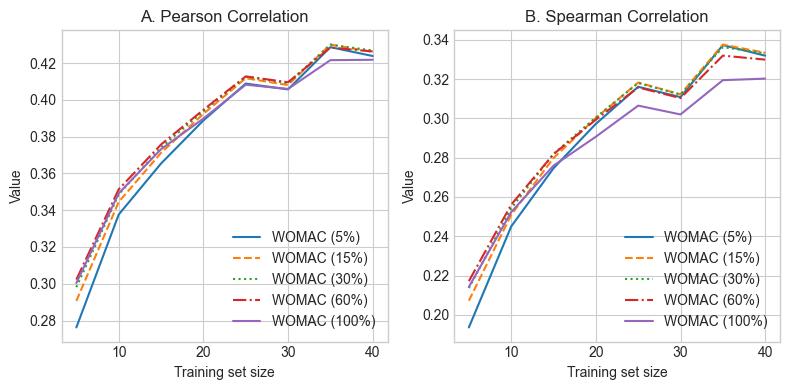}
          \caption{\textbf{[ACX Data] Plot of correlations (y-axis) between score on in-sample tasks and MSE on out-of-sample tasks, varying number of in-sample tasks (x-axis).} Different lines indicate different cutoff values of hyperparameter $k$. For example, $k=30\%$ means experts are scored against the average of the top 30\% of peers (as evaluated using MSE against in-sample outcomes). The figure shows that the exact choice of $k$ is not too material.}
          \label{fig:oos-acx-cutoffs}
        \end{figure*}

        \begin{figure*}
          \centering
          \includegraphics[scale=0.65]{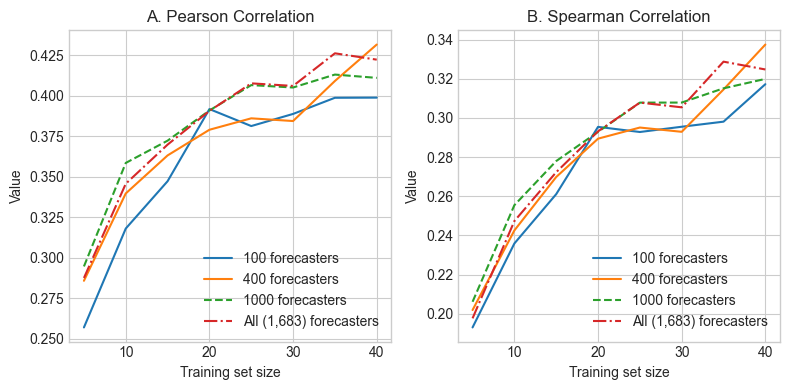}
          \caption{\textbf{[ACX Data] Plot of correlations (y-axis) between score on in-sample tasks and MSE on out-of-sample tasks, varying number of in-sample tasks (x-axis).} Different lines indicate different competitions pool sizes $n$. The figure shows predictiveness of WOMAC scores does not depend too strongly on $n$ once $n$ is reasonable large (e.g., $n=100$).}
          \label{fig:oos-acx-n}
        \end{figure*}

        \begin{figure*}
          \centering
          \includegraphics[scale=0.65]{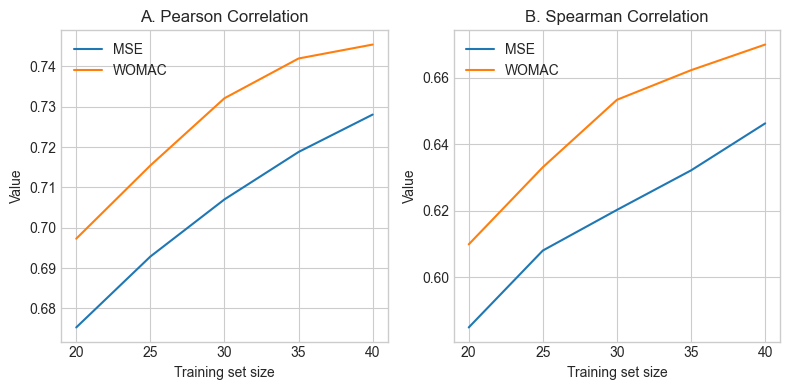}
          \caption{\textbf{[HFC Data] Plot of correlations (y-axis) between score on in-sample tasks and MSE on out-of-sample tasks, varying number of in-sample tasks (x-axis).} Analogous to Figure \ref{fig:oos-acx}, we observe that in-sample WOMAC score is more predictive of out-of-sample MSE score compared to in-sample MSE score.}
          \label{fig:oos-hfc-main}
        \end{figure*}

        \begin{figure*}
          \centering
          \includegraphics[scale=0.65]{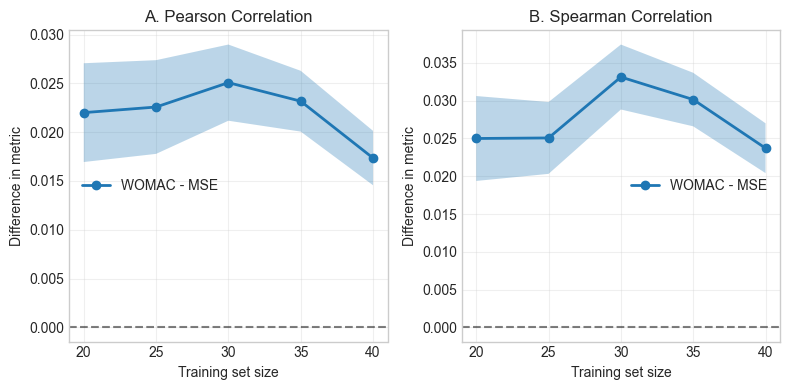}
          \caption{\textbf{[HFC Data] Plot of gap between in-sample WOMAC score's \emph{correlation} with out-of-sample MSE score, and in-sample MSE score's \emph{correlation} with out-of-sample MSE score.} Analogous to Figure \ref{fig:oos-acx-errs}, we observe that in-sample WOMAC scores are significantly more predictive of out-of-sample MSE scores, compared to in-sample MSE scores.}
          \label{fig:oos-hfc-errs}
        \end{figure*}

        \begin{figure*}
          \centering
          \includegraphics[scale=0.65]{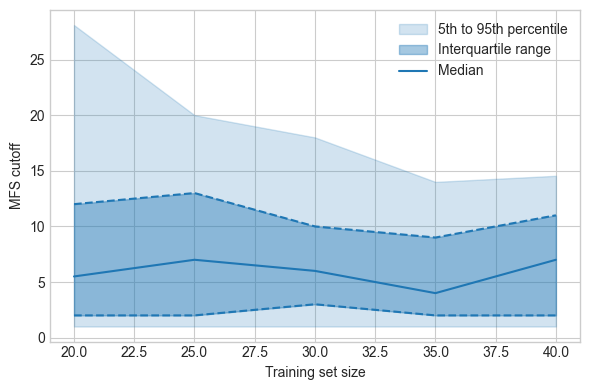}
          \caption{\textbf{[HFC Data] Plot of optimal cutoff hyperparameter $k$ for various training set sizes.} This figure shows the median, IQR, and 5th-95th percentile range of optimal cutoff hyperparameters across simulations for each training set size. We observe that roughly taking the average of the top-5\% minimizes the MSE with respect to realized outcomes.}
          \label{fig:oos-hfc-cutoffs}
        \end{figure*}

        \end{document}